\begin{document}
\ArticleType{RESEARCH PAPER}
\Year{2022}
\Month{}
\Vol{}
\No{}
\DOI{}
\ArtNo{}
\ReceiveDate{}
\ReviseDate{}
\AcceptDate{}
\OnlineDate{}

\title{ The Theoretical Limit of Radar Target Detection}{Title keyword 5 for citation Title for citation Title for citation}

\author{Dazhuan XU}{{xudazhuan@nuaa.edu.cn}}
\author{Nan WANG}{}
\author{Han ZHAN}{}
\author{Xiaolong KONG}{}

\AuthorMark{Author A}

\AuthorCitation{Author A, Author B, Author C, et al}


\address{Electronic and information engineering, Nanjing University of Aeronautics and Astronautics, Nanjing {\rm 210000}, China}

\abstract{ In this paper, the problem of optimal radar target detection is investigated by applying Shannon information theory. The information-theoretical framework is established for a detection system model containing the target existence state, which consists of four parts. The definition of detection information (DI) is proposed to solve the quantitative problem of detection performance, which is a theoretical measure for evaluating the target detection system with the unit of bits. We prove that the \textit{a priori} probability of target existence is equal to the false alarm probability as long as the observation interval is large enough, establishing the relationship between the \textit{a priori} probability and the false alarm probability. The sampling \textit{a posteriori} (SAP) detection is proposed, whose empirical DI approaches the theoretical DI in detection performance. We prove that DI is the theoretical limit of target detection. Specifically, the empirical DI of SAP approaches the theoretical DI as the snapshots increase. Conversely, there is no detector whose empirical DI is greater than DI and the failure probability approaches zero. Numerical simulations are conducted to demonstrate that the empirical DI of the SAP detector approaches the DI with thousands of snapshots and that the empirical DI of MAP and NP detectors does less than DI.}

\keywords{Shannon information theory, detection information, theoretical limit, false alarm theorem, sampling \textit{a posteriori} probability detection, target detection theorem.}

\maketitle
\section{Introduction}

Radar target detection has become a crucial component in various applications, encompassing aerospace\cite{1,2}, military surveillance\cite{3,4}, traffic monitoring\cite{5,6}, and environmental supervision\cite{7,8}. For instance, in military contexts, the capability to promptly detect targets is of paramount importance to safety and operational efficiency. Additionally, as the initial phase of radar operations, target detection performance holds significant implications for subsequent radar tasks such as target tracking, identification, or classification\cite{9,10,11}.

In radar target detection, the Neyman-Pearson (NP) criterion is commonly employed for designing detection systems. The NP criterion specifically refers to maximizing the detection probability under a given false alarm probability\cite{12,13}. Under the NP criterion, the generalized likelihood ratio test (GLRT) is adopted for detecting distributed targets, and the expression for the false alarm probability of the GLRT detector is derived\cite{14}. Reference \cite{15} presents the maximum a posteriori ratio test (MAPRT) detector and derives closed-form expressions for the false alarm probability and approximate expressions for the detection probability of the MAPRT detector for a dual-static radar system. To enhance the detection capability of distributed radars, reference \cite{16} proposes a constant false alarm rate (CFAR) detector and derives analytical expressions for the local false alarm probability and detection probability of the CFAR detector. Addressing range-Doppler extended targets, the adaptive subspace detector is proposed to improve detection performance in non-Gaussian clutter environments \cite{17}. For multiple-input multiple-output (MIMO) radar systems, reference \cite{18} states that the transmit waveform of MIMO radar should be matched to the type of target through the design of a generic target model.

However, the NP criterion applies only to "binary detection" scenarios, meaning either deciding that the target is present or absent. Binary detection is analogous to the hard decision in communication. The hard decision is capable of achieving optimality only in error rates but not in channel capacity, leading to its complete replacement by soft decision. This paper establishes the information theory for target detection by modeling target detection as a detection problem in an on-off keying (OOK) modulation system, pioneering an approach to solving the optimal target detection problem.

To quantify the detection information, we apply Shannon's information theory \cite{19,20,21} to radar target detection. The utilization of information theory in radar is traced back to the early 1950s. Woodward and Davies defined the information from radar observations as the entropy difference between the prior and posterior probability distributions of time delay \cite{22}. Additionally, Woodward provided a tutorial on the application of information theory to radar problems \cite{23}. Building upon Woodward's work, Bell designed optimal waveforms by maximizing the mutual information between the received signal and the target pulse response to enhance estimation performance \cite{24,25}. Since then, applying information theory in radar has attracted considerable attention. For example, the waveform design approach that maximizes Kullback-Leibler divergence is presented to enhance the detection capability of distributed targets under constraints on subcarrier power ratios \cite{26}. Reference \cite{27} states the relationships between three waveform design metrics: signal-to-noise ratio (SNR), Kullback-Leibler divergence, and mutual information.

We have conducted significant work in radar signal processing utilizing information theory. Reference \cite{28} defines joint range-scattering information and joint entropy error as forward and negative indicators, respectively, for evaluating estimation performance. The study proves that the theoretical joint range-scattering information and theoretical entropy error are the achievable theoretical limits of parameter estimation. The approximate closed-form expression for entropy error is derived in \cite{29,30}, indicating that entropy error serves as an effective performance bound. The novel definitions of joint resolution limit for the multi-parameter are proposed based on scattering information, and approximate closed-form expressions for the resolution limit are derived \cite{31,32,33}.

This paper presents the theoretical limit of radar target detection based on Shannon's information theory. We give a system model for detecting targets by introducing a target state $ v $ into a general radar system model, where $  v=0,1 $, for  $  H_{0} $ (signal absent), $  H_{1} $ (signal present), respectively. Taking into account the statistical properties of noise and target scattering, we derive the detection channel $ p\left( {{\boldsymbol{y}}\left| v \right.} \right) $, which is an equivalent communication system with on-off keying modulation. Then with the Bayesian formula, we derive the \textit{a posteriori} probability distribution function (PDF) $ P\left( {v\left| {\boldsymbol{y}} \right.} \right) $ of the target state. The main contributions of this paper are as follows:

\begin{itemize}
\item[$ \bullet $ ]	
 We define detection information (DI) as the mutual information between the received signal and the target state. DI is a theoretical metric for evaluating the performance of a target detection system.
\end{itemize}

\begin{itemize}
\item[$ \bullet $ ]	
We prove that the false alarm probability is equal to the prior probability of target existence as long as the observation interval is sufficiently large. This theorem establishes a connection between the NP criterion and the maximum a posteriori (MAP) criterion, laying the foundation for the application of Bayesian principles in the field of target detection.
\end{itemize}

\begin{itemize}
\item[$ \bullet $ ]
We propose the sampling \textit{a posteriori} (SAP) probability detector, which is a stochastic detector that exhibits detection performance approaching the theoretical DI.
\end{itemize}

\begin{itemize}
\item[$ \bullet $ ]
We prove that DI is the theoretical limit of target detection. Specifically, the empirical DI of SAP approaches the theoretical DI as the snapshots increase. Conversely, there is no detector whose empirical DI is greater than DI and the failure probability approaches zero.
\end{itemize}

\begin{itemize}
	\item[$ \bullet $ ]
We also prove the extended Fano's inequality, which plays a fundamental role in proving the converse theorem for the theoretical limit of target detection.
\end{itemize}

Numerical simulations are conducted to demonstrate that the empirical DI of the SAP detection can approach the DI with thousands of snapshots. The results also illustrate that the MAP and NP detection do not perform better than the theoretical limit.

The rest of the paper is organized as follows. 
In Section 2, the radar system model is introduced. The equivalent detection channel and the \textit{a posteriori} PDF are derived in Section 3 and Section 4, respectively. The definition of DI is shown in Section 5 and the theoretical false alarm probability and detection probability are provided in Section 6. Section 7 presents the false alarm theorem. Section 8 proposes the SAP detector and introduces the NP and the MAP detectors, as well as provides their performance evaluation. Section 9 proves the theoretical limit of the target detection. Section 10 gives the simulation results, and Section 11 concludes this paper.

\textit{Notations:}  Throughout the paper, $ a $  represents random variables and $ \boldsymbol{a} $ denotes a vector. ${\rm{E}}(\cdot)$ denotes the expectation. The operators $(\cdot)^{\mathrm{T}}$ and $(\cdot)^{\mathrm{H}}$ denote the transpose and the conjugate transpose of a vector, respectively. A Gaussian variable with expectation $\mu $ and variance $\sigma^2$ is denoted by $ {\cal C}\mathcal{N}(\mu ,\sigma^2)$. $I_{0}(\cdot)$ stands for the zero-order modified Bessel function of the first kind. $\pi(\cdot)$ denotes the \textit{a priori} probability of a random variable. $d(\cdot)$ denotes the detection function.  $ {\mathcal{R}}\left( \cdot \right)  $ denotes the real part of a complex scalar.

\section{System Model}
Consider a general radar system model with the base-band signal $ \psi(t) $ and a limited bandwidth $ B/2 $. Suppose there are $ K  $ random targets in the observation interval. The received signal is given by
\begin{equation}
	y\left( t \right) = \sum\limits_{k = 1}^K {{v_k}{s_k}\psi \left( {t - {\tau _k}} \right) + w\left( t \right)} ,
\end{equation}
where $ v_{k} $ is a discrete random variable characterizing the state of the $ k$-th target, that is, when $ v=1 $ or $ v=0 $, it corresponds to the target is existent or absent, respectively; $ s_{k} $ is the complex reflection coefficient; $ \tau_{k} $ is the propagation delay of the $ k$-th target; $ w(t) $ is complex Gaussian noise of a limited bandwidth $ B/2 $ with zero mean and $ N_{0}/2 $ variance in its real and imaginary parts, respectively. As shown in Figure 1, the radar detection system is equivalent to a joint modulation system according to (1). In the equivalent communication system, the target state $v$, the scattered signal $s$, and the time delay $\tau$ perform OOK modulation, amplitude-phase modulation, and time delay modulation on the baseband signal, respectively. The modulation processes are implemented during the process of reflecting radar signals from the target.

\begin{figure*}[!ht]
	\center{\includegraphics[width=13cm]  {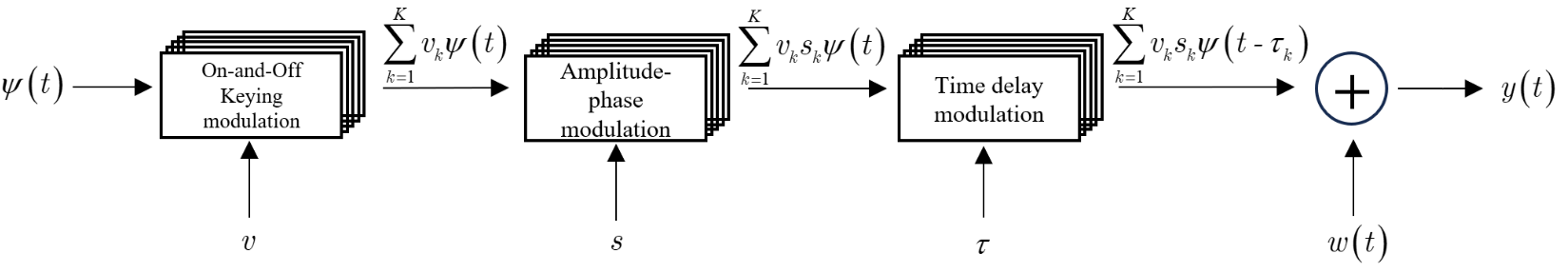}} 
	\caption{\label{1} Equivalent communication system model for radar detection systems}
\end{figure*}

Assume that the signal energy is almost completely within the observation interval. Sample the received signal $ y\left( t \right) $ at rate $ B $ according to the Nyquist-Shannon sampling theorem. The  corresponding sampling time is $ T $ in the regularized observation interval $ [-N/2, N/2) $, where $ N\!=\!T\!B $ is the total number of sampling points named the time-bandwidth product.  The discrete-time signal  can be expressed as 
\begin{equation}
	{{y}}\left( n \right) = \sum\limits_{k = 1}^K {{v_k}{s_k}\psi \left( {n - {x_k}} \right) + w\left( n \right)} ,n =  - \frac{T\!B}{2}, \cdots ,\frac{T\!B}{2} - 1,
\end{equation}
where $ x_{k}=B\tau_{k} $ is the regularized delay. The vector form of (2) is given by
\begin{equation}
	{\boldsymbol{y}} = {\boldsymbol{\Psi}}\left( x \right)diag\left( \boldsymbol{v} \right){\boldsymbol{s}} + {\boldsymbol{w}},
\end{equation}
where  ${\boldsymbol{\Psi}}\left( x \right) = \left[ {\cdots,{\boldsymbol{\psi}}\left( {{x_k}} \right), \cdots} \right] $ is the position matrix of the $ K $ targets; ${{\boldsymbol{\psi}}^{\rm{T}}}\left( {{x_k}} \right) = \left[ { \cdots ,\psi (n - {x_k}), \cdots } \right] $ is the sampling waveform of the $ k $-th target; ${\boldsymbol{v}} = ({v_1},{v_2}, \cdots ,{v_K}) $ is the target state vector;  ${\boldsymbol{s}} = {\left[ {{{s}_{1}},{{s}_{2}}, \cdots ,{{s}_{K}}} \right]^{\rm{T}}}$ is the scattering vector of the $ K$ targets; $ \boldsymbol{w} $ is a complex Gaussian noise vector, whose elements are independent identically distributed complex Gaussian variables with mean zero and variance $ N_{0} $.

\section{Equivalent Detection Channel}
According to the signal model in (3), the probability density function of $ {\boldsymbol{y}} $ conditioned on $ \boldsymbol{v} $, $ \boldsymbol{x} $, and $ \boldsymbol{s} $ is given by
\begin{equation}
	p\left( {{\boldsymbol{y}}\left| {{\boldsymbol{v}},{\boldsymbol{x,s}}} \right.} \right) \!=\! {\left( {\frac{1}{{{\rm{\pi }}{N_0}}}} \right)^N}\!\!\!\exp\! \left(\! { - \frac{1}{{{N_0}}}{{\left\| {{\boldsymbol{y}} - {\boldsymbol{\Psi}}{\rm{(}}{\boldsymbol{x}}{\rm{)}}diag\left( {\boldsymbol{v}}\right) {\boldsymbol{s}}} \right\|}^2}}\! \right),
\end{equation}
where  ${\boldsymbol{x}} \!=\! \left[ {x_1},{x_2}, \cdots ,{x_K}\right]$ denotes the regularized time delay vector.

To facilitate theoretical analysis, assume that there is only one target in the observation interval. The complex scattering signal is $s = \alpha {{\rm{e}}^{{\rm{j}}\varphi }} $, where $\alpha $ and $\varphi $ denote the amplitude and the phase, respectively. Assume that the sampling waveform is unit-energy signal, i.e.,$  {\left\| \boldsymbol{\psi}\right\|}^2=1 $. It is clear that $ v^{2}=v $, and expanding the exponent in (4) yields
\begin{equation}
	\begin{split}
		p\left( {\left. {\boldsymbol{y}} \right|v,x,\varphi } \right)
		={\left( {\frac{1}{{\pi {N_0}}}} \right)^N}\exp \left[ { - \frac{1}{{{N_0}}}\left( {{{\boldsymbol{y}}^{\rm{H}}}{\boldsymbol{y}} + v{\alpha ^2}} \right)} \right]
	\exp \left\{ {\frac{2}{{{N_0}}}{\mathcal{R}}\left[ {v\alpha {e^{ - j\varphi }}{{\boldsymbol{\psi}}^{\rm{H}}}\left( x \right){\boldsymbol{y}}} \right]} \right\}.
	\end{split}
\end{equation}
Futher, suppose that $\alpha $ is constant and $\varphi $ is distributed uniformly in $ [0,2\pi] $ (a Swerling 0 target-type model). Assuming that the delay is uniformly distributed over the observation interval $ [-T\!B/2, T\!B/2) $,   the equivalent detection channel is given by
\begin{equation}
	\begin{split}
		p\left( {{\bf{y}}\left| v \right.} \right) 
		={\rm{ }}{\left( {\frac{1}{{\pi {N_0}}}} \right)^N} \exp \left[ { - \frac{1}{{{N_0}}}\left( {{{\bf{y}}^{\rm{H}}}{\bf{y}}+v{\alpha ^2}} \right)} \right]
		\frac{1}{{T\!B}}\int_{ - \frac{{T\!B}}{2}}^{\frac{{T\!B}}{2}}  {{I_0}\left[ {\frac{2v\alpha }{{{N_0}}}\left| {{{\boldsymbol{\psi}}^{\rm{H}}}(x){\bf{y}}} \right|} \right] {\rm{d}}x},
	\end{split}
\end{equation}  where
\begin{equation}
	{I_0}\!\!\left[ {\displaystyle\frac{{2v\alpha }}{{{N_0}}}\left| {{{\boldsymbol{\psi}}^{\rm{H}}}\!\left( x \right)\!{\boldsymbol{y}}} \right|} \right] \! \!=\! \! \displaystyle\frac{1}{{2\pi }}\! \!\int_0^{2\pi } \!{\!\! \exp\! \left\{ {\!\!\frac{2}{{{N_0}}}\!{\mathcal{R}}\left[ {v\alpha {e^{ - j\varphi }}{{\boldsymbol{\psi}}^{\rm{H}}}\left( x \right){\boldsymbol{y}}} \right]}\! \right\}\! {\rm{d}}\varphi }.
\end{equation}  Note that $ \left| {{{\boldsymbol{\psi}}^{\rm{H}}}{\rm{(}}x{\rm{)}}{\boldsymbol{y}}} \right| $ is the module of output of the matched filter.

\section{\textit{A posteriori} probability distribution of the target state}
Based on the Bayesian formula and equivalent detection channel in (6), the \textit{a posteriori} PDF is given by
\begin{equation}
	\begin{split}
		P\left( {v\left| {\boldsymbol{y}} \right.} \right) = \frac{1}{\mathcal{Z} }\pi \left( v \right)\exp \left( { - v{\rho ^2}} \right)
		\frac{1}{{T\!B}}\int_{ - \frac{{T\!B}}{2}}^{\frac{{T\!B}}{2}} {{I_0}\left[ {2v{\rho ^2}{{\left| {{{\boldsymbol{\psi}}^{\rm{H}}}(x){\boldsymbol{y}}} \right|} \mathord{\left/
						{\vphantom {{\left| {{{\boldsymbol{u}}^{\rm{H}}}(x){\boldsymbol{y}}} \right|} \alpha }} \right.
						\kern-\nulldelimiterspace} \alpha }} \right]{\rm{d}}x}  ,
	\end{split}
\end{equation}where $ 	\mathcal{Z}  \! =\!\sum\limits_v \! \pi \!\left( v \right)\!\exp\! \left( { - v{\rho ^2}} \right)\displaystyle\!\! \frac{1}{{T\!B}}\!\!\int_{ - \frac{{T\!B}}{2}}^{\frac{{T\!B}}{2}} \!\!{{I_0}\!\!\left[ {2v{\rho ^2}{{\left| {{{\boldsymbol{\psi}}^{\rm{H}}}(x){\boldsymbol{y}}} \right|} \mathord{\left/
				{\vphantom {{\left| {{{\boldsymbol{u}}^{\rm{H}}}(x){\boldsymbol{y}}} \right|} \alpha }} \right.
				\kern-\nulldelimiterspace}\! \alpha }}\! \right]\!{\rm{d}}x}  $ denotes the regularized factor and $ {\rho ^2}\!=\!{{{\alpha ^2}} \mathord{\left/{\vphantom {{{\alpha ^2}} {{N_0}}}} \right.\kern-\nulldelimiterspace} {{N_0}}}  $  denotes the $ \mathrm{SNR} $. 

For a  snapshot, assuming the actual target state, position, and phase are $ v_{0} $, $ x_{0} $, and  $ \varphi _{0} $, respectively.  Then the received signal is expressed as
\begin{equation}
	{\boldsymbol{y}}_{0}\left( n \right) = v_{0}\alpha {{\rm e}^{j{\varphi _0}}}\psi\left({n-{x_0}}  \right) + {\boldsymbol{w}}_0\left( n \right).
\end{equation}
Substituting (9) into (8), the \textit{a posteriori} PDF is given by 
\begin{equation}	    
	P\left( {v\left| {\boldsymbol{y}} \right.} \right)\! \!=\!\! \frac{1}{\mathcal{Z}  }\pi \left( v \right)\exp\! \left( \!{ - v{\rho ^2}}\! \right)
	\!\frac{1}{{T\!B}}\!\int_{ - \frac{{T\!B}}{2}}^{\frac{{T\!B}}{2}} \!\!{{I_0}\!\!\left[ {2v{\rho} \!\!\left| {\sum\limits_{n =  - {{T\!B} \mathord{\left/
							{\vphantom {{T\!B} 2}} \right.
							\kern-\nulldelimiterspace} 2}}^{{{T\!B} \mathord{\left/
							{\vphantom {{T\!B} 2}} \right.
							\kern-\nulldelimiterspace} 2} - 1} \!\! \!{\left( \!\!{v_{0}\rho\psi \left( {n - {x_0}} \right)\psi (n - x)  \!\!+ \!\! \frac{1}{\alpha }\rho{{\rm{e}}^{ - j{\varphi _0}}}{{\boldsymbol{w}}_{{0}}}\left( n \right)\psi (n - x)} \! \right)} }\! \right|} \right]\!\!\!{\rm{d}}x}.
\end{equation}	 To simplify the exhibition, we assume that the transmitted signal is $ {\boldsymbol{\psi }}\left( x \right) = {\mathop{\rm sinc}\nolimits} \left( x \right) $, then the  \textit{a posteriori}  PDF is given by
\begin{equation}
	\begin{split}
		P\left( {v\left| {\boldsymbol{y}} \right.} \right) 
		=\frac{1}{\mathcal{Z}  }\pi \left( v \right)\exp \left( { - v{\rho ^2}} \right)
		\frac{1}{T\!B}\int_{ - \frac{{T\!B}}{2}}^{\frac{{T\!B}}{2}} {{I_0}\left( {2v{\rho }\left|  {v_{0}\rho{\mathop{\rm sinc}\nolimits}\left( {x-{x_0}} \right)+ \mu \left( x \right)}  \right|} \right){\rm d}x},		
	\end{split}		
\end{equation} where $ {\mathop{\rm sinc}\nolimits} \left( {x - {x_0}} \right)  = {\sum\limits_{n =  - {T\!B \mathord{\left/
				{\vphantom {N 2}} \right.
				\kern-\nulldelimiterspace} 2}}^{  {T\!B \mathord{\left/
				{\vphantom {N 2}} \right.
				\kern-\nulldelimiterspace} 2-1}}\psi\left( {n-{x_0}} \right)\psi(n-x)}$ is the auto-correlation function of base-band signals and $\mu\left( x \right) = \frac{1}{\alpha }\rho{{\rm e}^{ - j{\varphi _0}}}{{{w}}_0}\left( x \right)$ obeys the complex standard normal distribution, where	$ {{w}_0}\left( x \right) \!=\! \sum\limits_{n =  - {T\!B \mathord{\left/
			{\vphantom {N 2}} \right.
			\kern-\nulldelimiterspace} 2}}^{{T\!B \mathord{\left/
			{\vphantom {N 2}} \right.
			\kern-\nulldelimiterspace} 2}-1} {{\mathop{\rm sinc}\nolimits} \left( {n - x} \right){{\boldsymbol{w}}_0}(n)}  $ is complex white Gaussian noise with zero mean and  $ {{{N_0}} \mathord{\left/
		{\vphantom {{{N_0}} 2}} \right.
		\kern-\nulldelimiterspace} 2} $ variance in its real and imaginary parts, respectively. Equation (11) is also expressed as
\begin{equation}
	P\left( {v\left| {\boldsymbol{y}} \right.} \right) =\displaystyle \frac{{\pi (v){\Upsilon }\left( {v,{\boldsymbol{y}}} \right)}}{{\pi (0) + \pi (1){\Upsilon }\left( {1,{\boldsymbol{y}}} \right)}},
\end{equation} where 
\begin{equation}
	\begin{aligned}
		\Upsilon \left( {v,{\boldsymbol{y}}} \right) = \exp \!\left( { - v{\rho ^2}} \right)
		\displaystyle\frac{1}{T\!B}\int_{ - \frac{{T\!B}}{2}}^{\frac{{T\!B}}{2}}{{I_0}\left( {2v{\rho }\left| \!{v_{0}\rho{\mathop{\rm sinc}\nolimits}\left( {x - {x_0}} \right) +\mu\left( x \right) } \right|} \right)\!{\rm{d}}x} 
	\end{aligned}
\end{equation} denotes the detection statistics, which  is the average over the Bessel function with respect to time in the observation interval.

\section{Detection Information}
Information theory provides a means to quantify the reduction in \textit{a priori} uncertainty of a transmitted message.  The target detection is essentially the process of acquiring information about the target state, therefore the detection performance can be measured by mutual information in tne unit of $\mathrm{bit}$.
According to the equivalent detection channel in (6), the definition of DI is given as follows.

\definition[Detection Information] { The mutual information $I\left( {{\boldsymbol{y}};{v}} \right) $ between the target state $ v $ and the received signal $ \boldsymbol{y} $ is defined as the DI,
	\begin{equation}
		I\left( {{\boldsymbol{y}};{{v}}} \right) = {\rm{E}}_{ {\boldsymbol{y}}}\left[ {\log \frac{{p\left( {\left. {\boldsymbol{y}} \right|v} \right)}}{{p\left( {\boldsymbol{y}} \right)}}} \right],
	\end{equation}
	where $ 	p\left( {\boldsymbol{y}} \right) = \sum\limits_v^{} {p\left( {\left. {\boldsymbol{y}} \right|v} \right)\pi \left( v \right)}  $	is the probability density function of $ {\boldsymbol{y}} $.
}

The DI is a theoretical measure for the assessment of target detection. As a positive indicator, the more DI there is, the better the detection performance will be. The alternative calculation of the DI is given according to the \textit{a posteriori} PDF in (11), i.e.,
\begin{equation}
	I\left( {{\boldsymbol{y}};{{v}}} \right) = H\left( {{v}} \right) - H\left( {{{v}}|{\boldsymbol{y}}} \right),
\end{equation}
where 
\begin{equation}
	H\left( {{v}} \right) =  - \sum\limits_v^{} {\pi (v)\log \pi (v)}
\end{equation}
is the \textit{a priori}  entropy and
\begin{equation}
	H\left( {{{v}}|{\boldsymbol{y}}} \right) = {{\rm{E}}_{\boldsymbol{y}}}\left[ { - \sum\limits_v^{} {P(v|{\boldsymbol{y}})\log P(v|{\boldsymbol{y}})} } \right]
\end{equation}
is the \textit{a posteriori} entropy.

\section{Theoretical false alarm probability and detection probability}
The false alarm probability and detection probability are two common decision probabilities. We provide the theoretical expressions for both based on the \textit{a posteriori}  PDF of the target state below. We decompose the signal $ \boldsymbol{y} $ into two sub-signals ${{\boldsymbol{y}}_0}$ and ${{\boldsymbol{y}}_1 }$, which denote the received signal with only noise ($ v_{0}\!=\!0 $) and with both the target signal and the noise ($ v_{0}\!=\!1 $), respectively.  Then the outputs of the matched filter are
\begin{equation}
	\left\{ \begin{array}{l}
		{{\boldsymbol{\psi}}^{\rm{H}}}{\rm{(}}x{\rm{)}}{{\boldsymbol{y}}_0} = w{\rm{(}}x{\rm{),}}\\
		{{\boldsymbol{\psi}}^{\rm{H}}}{\rm{(}}x{\rm{)}}{{\boldsymbol{y}}_1} = {\rm{sinc(}}x - {x_0}{\rm{)}} + w{\rm{(}}x{\rm{)}}{\rm{.}}
	\end{array} \right.
\end{equation}
According to (13), the \textit{a posteriori} probabilities of existing state are
\begin{equation}
	\left\{ \begin{array}{l}
		P\left( {1{{\left| {\boldsymbol{y}}_0 \right.}}} \right) =\displaystyle \frac{{\pi (1)\Upsilon \left( {1,{{\boldsymbol{y}}_0}} \right)}}{{\pi (0) + \pi (1)\Upsilon \left( {1,{{\boldsymbol{y}}_0}} \right)}},\\
		P\left( {1{{\left| {\boldsymbol{y}}_1 \right.}}} \right) =\displaystyle \frac{{\pi (1)\Upsilon \left( {1,{{\boldsymbol{y}}_1}} \right)}}{{\pi (0) + \pi (1)\Upsilon \left( {1,{{\boldsymbol{y}}_1}} \right)}},
	\end{array} \right.
\end{equation}
where the corresponding detection statistics are given by
	\begin{equation}
	\left\{ {\begin{array}{*{20}{l}}
			\ {\Upsilon \left( {1,{{\boldsymbol{y}}_0}} \right) = \exp \left( { - {\rho ^2}} \right)\displaystyle\frac{1}{{T\!B}}\int_{ - \frac{{T\!B}}{2}}^{\frac{{T\!B}}{2}} {\!{I_0}\left( {2{\rho }\!\left| \mu\left( x \right) \right|} \right){\rm{d}}x} ,}\\
			\begin{array}{c}
				\begin{split}
					\Upsilon \left( {1,{{\boldsymbol{y}}_1}} \right) = \exp \left( { - {\rho ^2}} \right)
					\displaystyle\frac{1}{{T\!B}}\!\int_{ - \frac{{T\!B}}{2}}^{\frac{{T\!B}}{2}} \!{{I_0}\left( {2{\rho}\left| {\rho{\rm{sinc}}(x - {x_0})\! + \mu\left( x \right)} \right|} \right){\rm{d}}x}.
				\end{split}
			\end{array}
	\end{array}} \right.
\end{equation} The  theoretical false alarm probability and detection probability are
\begin{equation}
	\left\{ \begin{array}{c}
		{P_{{\rm{FA}}}} = {{\rm{E}} _{{\boldsymbol{y}}_0}}\left[ {\displaystyle\frac{{\pi (1)\Upsilon \left( {1,{{\boldsymbol{y}}_0}} \right)}}{{\pi (0) + \pi (1)\Upsilon \left( {1,{{\boldsymbol{y}}_0}} \right)}}} \right],\\
		{P_{\rm{D}}} = {{\rm{E}} _{{\boldsymbol{y}}_1}}\left[ {\displaystyle\frac{{\pi (1)\Upsilon \left( {1,{{\boldsymbol{y}}_1}} \right)}}{{\pi (0) + \pi (1)\Upsilon \left( {1,{{\boldsymbol{y}}_1}} \right)}}} \right].
	\end{array} \right.
\end{equation}

As we can see from (21) that $ P_{{\rm{FA}}} $ and $ P_{{\rm{D}}} $ depend on the \textit{a priori} probability $ \pi(1) $ of the existing state. In turn, we can obtained the $ \pi(1) $ from setting the tolerable level of  $ P_{{\rm{FA}}} $ according to the specific needs in detection environment. In particular,  it is shown that, when the SNR is 0, $ P_{{\rm{FA}}} $ and $ P_{{\rm{D}}} $ are equal to $ \pi(1) $, respectively. This is attributed to both $ \Upsilon \left( {1,{{\boldsymbol{y}}_0}} \right) $ and $ \Upsilon \left( {1,{{\boldsymbol{y}}_1}} \right) $ are 1 as shown in (20). We can refer to this phenomenon as unbiased detection.                                                                                                             

\subsection{Special Case}

We consider a special case in which the matched filter is perfectly matched to the target position. In this case, the correlation peak is the highest when the SNR is high, so it is reasonable to use the peak value for target detection. According to (6), the probability density function of $ \boldsymbol{y} $ conditioned on the target state $ v $ and the position $ x_{0} $ is given by
\begin{equation}
	\begin{split}
		p\left( {\left. {\boldsymbol{y}} \right|v,x_{0}} \right) = {\left( {\frac{1}{{\pi {N_0}}}} \right)^N}\exp \left[ { - \frac{1}{{{N_0}}}\left( {{{\boldsymbol{y}}^{\rm{H}}}{\boldsymbol{y}} + v{\alpha ^2}} \right)} \right]
	 {I_0}\left[ {\frac{{2v\alpha }}{{{N_0}}}\left| {{{\boldsymbol{\psi}}^{\rm{H}}}\left( x_{0} \right){\boldsymbol{y}}} \right|} \right],
	\end{split}
\end{equation} By the Bayesian formula and (22), the  \textit{a posteriori} PDF is given by 
\begin{equation}
	\begin{split}
		P\left( {v\left| {{\boldsymbol{y}},{x_0}} \right.} \right)\! = \!\frac{1}{\mathcal{Z} }\!\pi \left( v \right)\exp \left( { - v{\rho ^2}} \right)\!{I_0}\!\!\left[ {\frac{{2v\alpha }}{{{N_0}}}\left| {{{\boldsymbol{\psi}}^{\rm{H}}}\left( {{x_0}} \right){\boldsymbol{y}}} \!\right|} \!\right]
	\end{split}
\end{equation}
or
\begin{equation}
	P\left( {v\left| {\boldsymbol{y}},{x_0} \right.} \right) =\displaystyle \frac{{\pi (v){\Upsilon }\left( {v,{\boldsymbol{y}},{x_0}} \right)}}{{\pi (0) + \pi (1){\Upsilon}\left( {1,{\boldsymbol{y}}},{x_0} \right)}},
\end{equation}
where the detection statistics  are
\begin{equation}
	\left\{ \begin{array}{l}
		\Upsilon \left( {1,{{\boldsymbol{y}}_0},{x_0}} \right) \!=\! \exp \left( { - {\rho ^2}} \right)\!{I_0}\left( {2{\rho }\left|\mu\left( x \right) \right|} \right),\\
		\Upsilon \left( {1,{{\boldsymbol{y}}_1},{x_0}} \right) \!=\! \exp \left( { - {\rho ^2}} \right)\!{I_0}\left( {2{\rho }\left|  \!{\rho \!+ \mu\left( x \right)} \right|} \right).
	\end{array} \right.
\end{equation}
As a result, the DI is 
\begin{equation}
	I\left( {\boldsymbol{y}};{v}|x_{0} \right) \!=\! H\left( {{v}} \right) \!-\! {{\rm{E}} _{\boldsymbol{y}}}\!\!\left[ { - \sum\limits_v^{}\!\! {P(v|{\boldsymbol{y}},x_{0})\log P(v|{\boldsymbol{y}},x_{0})} } \right],
\end{equation} The  theoretical false alarm probability and detection probability are
\begin{equation}
	\left\{ \begin{array}{c}
		{P_{{\rm{FA}}}} = {{\rm{E}} _{{\boldsymbol{y}}_0}}\left[ {\displaystyle\frac{{\pi (1)\Upsilon \left( {1,{{\boldsymbol{y}}_0}},{x_0} \right)}}{{\pi (0) + \pi (1)\Upsilon \left( {1,{{\boldsymbol{y}}_0}},{x_0} \right)}}} \right],\\
		{P_{\rm{D}}} = {{\rm{E}} _{{\boldsymbol{y}}_1}}\left[ {\displaystyle\frac{{\pi (1)\Upsilon \left( {1,{{\boldsymbol{y}}_1}},{x_0} \right)}}{{\pi (0) + \pi (1)\Upsilon \left( {1,{{\boldsymbol{y}}_1}},{x_0} \right)}}} \right].
	\end{array} \right.
\end{equation}

\section{False Alarm Theorem}
Both the NP and DI criteria are applicable for performance evaluation in target detection. The NP criterion relies on the given $ {P_{{\rm{FA}}}}  $, while the DI criterion depends on the $ \pi(1) $.  The following theorem points out the relationship between the $ {P_{{\rm{FA}}}}  $ and the $ \pi(1) $, allowing the NP and DI criteria to be allowed to be compared.
\theorem[False Alarm Theorem]{
	Let the target position be uniformly distributed on the observed interval. If the observation interval is sufficiently large, the false alarm probability is equal to the \textit{a priori} probability of target existence for arbitrary scattering characteristics in a complex additive white Gaussian noise channel, 
	\begin{equation}
		{P_{{\rm{FA}}}} = \pi (1).
	\end{equation}
}
\noindent \textit{Proof: See Appendix A.}

In a single snapshot, tens of thousands of points or even more can be sampled by modern radar detection systems. Therefore, it ensures that the condition of a sufficiently large observation interval in the false alarm theorem can be satisfied.

\section{Detectors and Performance Evaluation}
In this section, we propose the SAP detector. Prior to introducing SAP detector, we analyze the detection performance of MAP and NP detectors in the framework of information theory.

\subsection{Neyman-Pearson detector}
The NP detection maximizes the $ {P_{{\rm{D}}}}  $ for a given $ {P_{{\rm{FA}}}}  $ constraint $ \eta \left( {\eta  \in \left( {0,1} \right)} \right) $ by the likelihood ratio test. The likelihood ratio is
\begin{equation}
	\frac{{p({\boldsymbol{y}}|1)}}{{p({\boldsymbol{y}}|0)}}\begin{array}{*{20}{c}}
		{{H_1}}\\
		\mathbin{\lower.3ex\hbox{$\buildrel>\over
				{\smash{\scriptstyle<}\vphantom{_x}}$}} \\
		{{H_{\rm{0}}}}
	\end{array}{T_h},
\end{equation}
where $ T_{h} $ denotes the detection threshold given by the $ {P_{{\rm{FA}}}}  $. 

The detection threshold for a specific position $ x_{0} $ is given as \cite{12}
\begin{equation}
	{T_h} =  - \sqrt {{N_0}\ln {P_{{\rm{FA}}}}} ,
\end{equation}
and the detection probability is 
\begin{equation}
	{P_{\rm{D}}} = {Q_{\rm{M}}}\left( {\sqrt {2{\rho ^2}} ,\sqrt { - 2\ln {P_{{\rm{FA}}}}} } \right),
\end{equation}
where $ {Q_{\rm{M}}}\left(  \cdot  \right)$ is the Marcum function.

According to (6), $ \displaystyle{{p({\boldsymbol{y}}|1)} \mathord{\left/{\vphantom {{p({\boldsymbol{y}}|1)} {p({\boldsymbol{y}}|0)}}} \right. \kern-\nulldelimiterspace} {p({\boldsymbol{y}}|0)}}= \Upsilon\left( 1, \boldsymbol{y} \right) $  is easily obtained, where $ \Upsilon\left( 1, \boldsymbol{y} \right) $ denotes the detection statistics as shown in (13). The false alarm probability is $ {P_{{\rm{FA}}}} = P\left\{ {{\Upsilon}\left( {1,{{\boldsymbol{y}}_0}} \right) > {T_h}} \right\} $. It is worth noting that the detection statistics $ {{\Upsilon}\left( {1,{{\boldsymbol{y}}_0}} \right)}$ equals $ 1 $ as shown in (A9), and not a random variable. This means that the detection threshold cannot be gained,  resulting in a failure of the NP detection. Therefore, the performance comparison of detectors is only provided in the matching case presented in section 4.4.

M. Kondo\cite{34} provides the calculation method of the empirical information of the NP detection, which is based on the false alarm probability and detection probability, 
\begin{equation}
	\begin{split}
		I_{\rm{NP}} = H\left( {{v}} \right)  -\left[ {\mathcal{P}}H\left( A \right) -\left(  {1-\mathcal{P}}\right) H\left( D \right)\right] ,
	\end{split}
\end{equation}
where 
\begin{equation}
	\left\{ \begin{array}{l}
		{\cal P} = \pi \left( 1 \right){P_{\rm{D}}} +\pi \left( 0 \right) {P_{{\rm{FA}}}},\\
		A = {{\pi \left( 1 \right){P_{\rm{D}}}} \mathord{\left/
				{\vphantom {{\pi \left( 1 \right){P_{\rm{D}}}} {\cal P}}} \right.
				\kern-\nulldelimiterspace} {\cal P}},\\
		D = {{\pi \left( 0 \right)\left( {1 - {P_{{\rm{FA}}}}} \right)} \mathord{\left/
				{\vphantom {{\pi \left( 0 \right)\left( {1 - {P_{{\rm{FA}}}}} \right)} {\left( {1 - {\cal P}} \right)}}} \right.
				\kern-\nulldelimiterspace} {\left( {1 - {\cal P}} \right)}}.
	\end{array} \right.
\end{equation}

\subsection{Maximum \textit{a posteriori} probability detector}
The MAP detection outputs the decision that maximizes the \textit{a posteriori}  probability, 
\begin{equation}
	{\hat v_{{\rm{MAP}}}} = \arg \mathop {{\rm{max}}}\limits_v P(v|{\boldsymbol{y}}),
\end{equation}
where $P(v|{\boldsymbol{y}}) \!=\! {{p({\boldsymbol{y}}|v)\pi (v)} \mathord{\left/
		{\vphantom {{p({\boldsymbol{y}}|v)\pi (v)} {p({\boldsymbol{y}})}}} \right.
		\kern-\nulldelimiterspace} {p({\boldsymbol{y}})}} $. In the case of equal \textit{a priori} probabilities, the MAP detection can be seen as a regularization of the maximum likelihood (ML) detection, i.e., $ 	{\hat v_{{\rm{ML}}}} \!= \!\arg \mathop {{\rm{max}}}\limits_v p\left( {{\boldsymbol{y}}|v} \right) $. However, it is rare that every target has equal \textit{a priori} probability in a radar target detection. Hence, ML detection does not apply to the detection of radar targets.

The false alarm probability and the detection probability of the MAP detection are given by
\begin{equation}
	\left\{ \begin{array}{l}
		{P_{{\rm{FA}}}} = P\left\{ {p\left( {\left. 1 \right|{{\boldsymbol{y}}_0}} \right) > p\left( {\left. 0 \right|{{\boldsymbol{y}}_0}} \right)} \right\}\\
		{P_{\rm{D}}} = P\left\{ {p\left( {\left. 1 \right|{{\boldsymbol{y}}_1}} \right) > p\left( {\left. 0 \right|{{\boldsymbol{y}}_1}} \right)} \right\}
	\end{array} \right.
\end{equation}

The empirical information of the MAP detection is obtained by (32).

\subsection{Sampling \textit{a posteriori}  detector}

A general target detection system is denoted by $\left( {\mathbb{A}, \pi (v), p({\boldsymbol{y}}|v), {\mathbb{B}} , \hat v = d(\cdot)} \right)$, where $\mathbb{A} \left( \mathbb{A} \!=\! \left\{  0,1\right\} \right)  $ is an input set;  $ {\mathbb{B}}\left(  {\boldsymbol{y}} \!\in\! \mathbb{B}\right)  $ is a vector space in the complex regime; $\hat v = d(\cdot)$ is  a detector that is a function of the received signal $ \boldsymbol{y} $, and $ p({\boldsymbol{y}}|v) $ is the detection channel.  Here, all statistical properties of the target state and the detection channel are assumed to be known by the detector, although only a portion of those is known in practice. The \textit{a posteriori} PDF $  P\left( {v\left| {\boldsymbol{y}} \right.} \right) $ is used to detect targets since all these statistical properties are contained in the  $  P\left( {v\left| {\boldsymbol{y}} \right.} \right) $. Based on the $  P\left( {v\left| {\boldsymbol{y}} \right.} \right) $, we propose SAP detection. Specifically, the decision of the SAP detection satisfy

\begin{equation}
	{\hat v_{{\rm{SAP}}}} = \arg \mathop {\rm{smp}}\limits_v P\left( {v\left| {\boldsymbol{y}} \right.} \right),
\end{equation}	where $  \mathop {\rm{smp}}\limits_v \left\{  \cdot  \right\} $ denotes the sampling, which is the draw of target states based on the \textit{a posteriori} probabilities. In Table  \uppercase\expandafter{\romannumeral1}, the comparison of decisions made by MAP and SAP detection is given to illustrate the sampling operation.

\begin{table*}[htbp]
	\centering
	{\caption{\small {{Decisions of  MAP And SAP Detection for 10 Snapshots}}}
		
		\begin{tabular}{p{2.5cm}p{0.7cm}p{0.5cm}p{0.5cm}p{0.5cm}p{0.5cm}p{0.5cm}p{0.5cm}p{0.5cm}p{0.5cm}p{0.5cm}p{0.5cm}}
			\toprule[0.5mm]
			\specialrule{0em}{4pt}{4pt}
			
			\multirow{2}{*}[-2.2ex]{{$\begin{array}{l}
						P\left( {1\left| {\boldsymbol{y}} \right.} \right) = 0.7\\
						P\left( {0\left| {\boldsymbol{y}} \right.} \right) = 0.3
					\end{array}$} }
			& {{$\hat v_{{\rm{MAP}}}$} }& { 1} & { 1}  & { 1}  & {1}  & { 1}  & { 1}  & { 1}  & { 1}  & {1}  & { 1}   \\  \specialrule{0em}{4pt}{4pt}\cline{2-12} 
			\specialrule{0em}{4pt}{4pt}
			& {{$\hat v_{{\rm{SAP}}}$} } & { 1}  & { 1}  & {0}  & { 1}  & { 0}  & { 1}  & {1}  & { 0}  & { 1}  & { 1}  \\ 
			\specialrule{0em}{4pt}{4pt} 
			\bottomrule[0.5mm]
	\end{tabular}}
\end{table*}

As can be seen that the MAP detection always selects 1 according to the maximum value of the \textit{a posteriori} probability of the target state, i.e., $ P\left( {1\left| {\boldsymbol{y}} \right.} \right)  $.  On the contrary, the decisions of SAP detection have seven 1s and three 0s in ten snapshots, which means that the decisions are drawn based on $ P\left( {v\left| {\boldsymbol{y}} \right.} \right)  $. The SAP detection is stochastic and theoretical, as demonstrated by the variation in decisions given the received signal $ \boldsymbol{y} $.

\section{Target Detection Theorem}
Inspired by Shannon's coding theorem, we prove the  target detection theorem from the achievability and the converse results in this section. The major difference with the coding theorem is that the receiver does not possess pre-knowledge of all possibilities for the transmitted sequence. Therefore, the proof given below makes use of the properties of the typical set, whose essential idea is random detection. 

Consider a $ m $ extension of  $\left( {\mathbb{V},\pi (v),p({\boldsymbol{y}}|v),\mathbb{Y}} \right) $ is denoted as $\left( {\mathbb{V}^{m},\pi (v^{m}),p({\boldsymbol{y}}^{m}|v^{m}),\mathbb{Y}^{m}} \right) $, where  $ \mathbb{V}^{m} $ denotes the extended input set,  $ \pi (v^{m}) $ represents the  \textit{a priori} distribution of the extended target state $ v^{m} $ ( $ {v^m} = ({v_1},{v_2}, \ldots ,{v_m}) $ ), $ p({\boldsymbol{y}}^{m}|v^{m}) $ denotes the extended channel ( $ {\boldsymbol{y}}^m = ({{\boldsymbol{y}}_1},{{\boldsymbol{y}}_2}, \ldots ,{{\boldsymbol{y}}_m})$ ), and $ \mathbb{\boldsymbol{Y}}^{m} $ is the extended signal set. The target detection system for $ m $ snapshots is as shown in Fig. 2. As can be seen, $m$ extended received signals are gained from the extended target states through the extended detection channel. The SAP detectors perform detection according to the $m$ extended received signals. The empirical entropy is calculated based on the  $m$ decisions to evaluate the detection performance.

\begin{figure*}[!ht]
	\center{\includegraphics[width=10cm]  {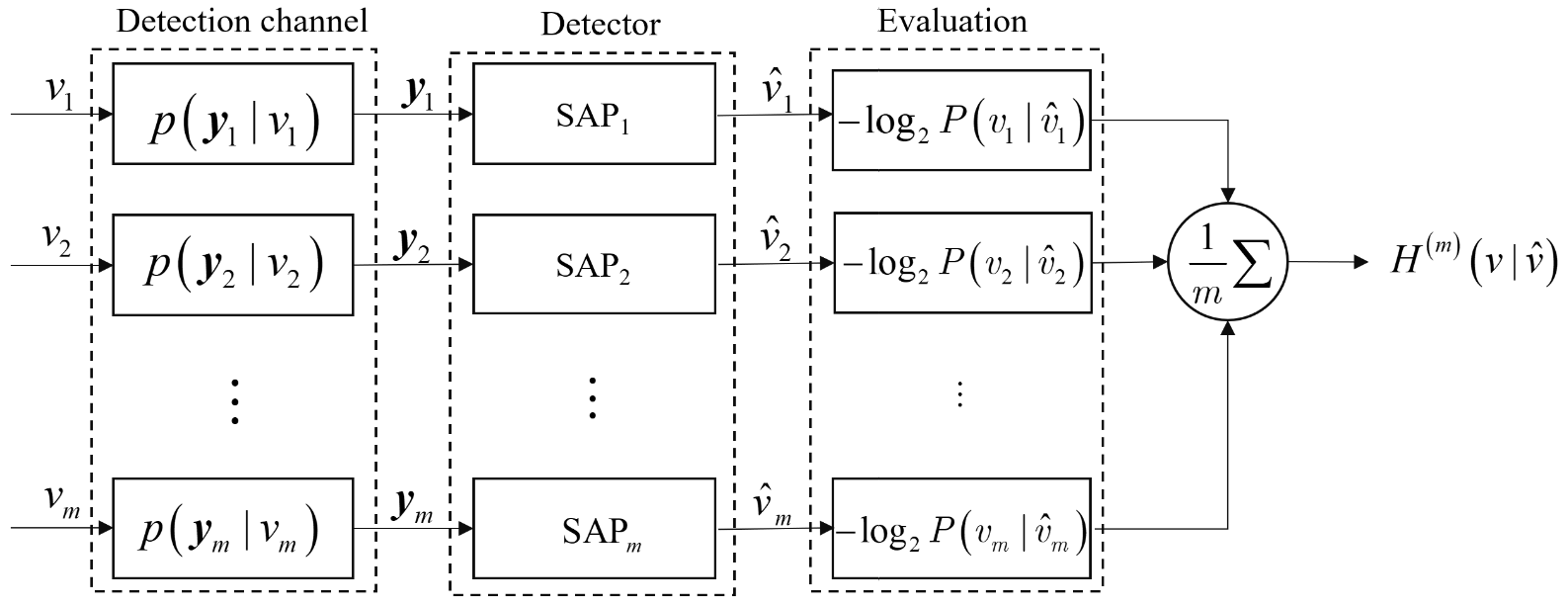}} 
    \caption{\label{1} Target detection system for $ m $ snapshots}
\end{figure*}

It can be seen that $\left( {{v^m},{\boldsymbol{y}}^{m},{{\hat v}^m}} \right) $  forms a Markov chain. The  \textit{a priori} distribution $ v^{m} $ and the extended channel $  p({{\boldsymbol{y}}^m}|{v^m}) $ satisfy 
\begin{equation}
	\pi \left( {{v^m}} \right) = \prod\limits_{i= 1}^m {\pi \left( {{v_{i}}} \right)} ,
\end{equation}
\begin{equation}
	p({{\boldsymbol{y}}^m}|{v^m}) = \prod\limits_{i = 1}^m {p({{\boldsymbol{y}}_{i}}|{v_{i}})} . 
\end{equation}

Let's take the example of evolutionary selection in a biological community to explain the target detection theorem. Assuming that each individual in the community can independently choose between two evolutionary paths, the theorem states that, as long as the number $m$ of individuals is sufficiently large, in the sense of empirical information approaching theoretical DI, the randomly selected evolutionary mode is optimal.

Before proving the theorem, we introduce the definitions and lemmas required for developing the target detection theorem.

\lemma [Chebyshev Law of Large Numbers] { If the random sequence ${z_1},{z_2}, \cdots {z_m} $ are independent identically distributed (i.i.d.) with mean $ \mu  $ and variance ${\sigma ^2} $, where the sample mean is ${\bar z_m} =\displaystyle \frac{1}{m}\sum\nolimits_{i = 1}^m {{z_i}} $, then
	\begin{equation}
		\Pr \left\{ {\left| {{{\bar z}_m} - \mu } \right| > \varepsilon } \right\} \le \frac{{{\sigma ^2}}}{{m{\varepsilon ^2}}}.
	\end{equation}	
}

\lemma [The Asymptotic Equipartition Property (AEP)] { If the random sequence ${v_1},{v_2}, \cdots {v_m} $ are i.i.d. $~ \pi\left(v \right)  $, then
	\begin{equation}
		\begin{split}
			- \frac{1}{m}\log \pi ({v_1},{v_2}, \cdots ,{v_m}) 	\to {\rm{E}} \left[ { - \log \pi (v)} \right] 
			= H(v) \quad
			in \ probability. 
		\end{split}
	\end{equation}
}

\definition[Typical Set] {If the random sequence ${v_1},{v_2}, \cdots {v_m} $ are i.i.d. ${\sim} \pi \left( v \right) $, for any $\varepsilon  > 0$, the typical set is defined as
	\begin{equation}
		\begin{array}{c}
			\begin{aligned}
				\mathbb{A}_\varepsilon ^{(m)}(v) = \left\{ {\left( {{v_1},{v_2}, \cdots ,{v_m}} \right) \in {v^m}} \right.:
				\left|\! { - \frac{1}{m}\log \pi \left( {{v_1},{v_2}, \cdots ,{v_m}} \right) \!-\! H(v)}\! \right|
				\!<\! \left. \varepsilon  \right\}
			\end{aligned}
		\end{array}
	\end{equation}where $\pi \left( {{v_1},{v_2}, \cdots ,{v_m}} \right) = \prod\limits_{i = 1}^m {\pi ({v_i})}  $.
} 
As a result of the AEP, the typical set  $ \mathbb{A} _\varepsilon ^{(m)}(v) $ has the following properties:
\lemma {
	For any $\varepsilon  > 0$, when $ m $ is sufficiently large, there is
	
	$ \left( 1\right)  $ 
	$ \Pr \left\{ {v \in \mathbb{A}_\varepsilon ^{(m)}(v)} \right\} > 1 - \varepsilon  $
	
	$ \left( 2\right)  $ 
	${2^{ - m\left[ {H(v) + \varepsilon } \right]}} < \pi (v) < {2^{ - m\left[ {H(v) - \varepsilon } \right]}} $
	
	$ \left( 3\right)  $ 
	$(1 - \varepsilon ){{\rm{2}}^{m(H(v) + \varepsilon )}}{\rm{ < }}\left\| {\mathbb{A} _\varepsilon ^{(m)}(v)} \right\| < {{\rm{2}}^{m(H(v) + \varepsilon )}}$, where $ \left\| {\mathbb{A} _\varepsilon ^{(m)}(v)} \right\|  $ denotes the cardinality of the set $ \mathbb{A} $, i.e., the number of elements in the set $ \mathbb{A} $.
}
\noindent\textit{Proof: see ([18], pp. 51-53).}

\definition [Jointly Typical Sequences]	{The set $ \mathbb{A}_\varepsilon ^{\left( m \right)} (v,\boldsymbol{y})$ of jointly typical sequences $\left( {{v^m},{{\boldsymbol{y}}^m}} \right)$ with respect to the distribution $p(v,{\boldsymbol{y}})$ is the set of $ M $-sequences with an empirical entropy $ \varepsilon $ close to the true entropy, i.e., 
	\begin{equation}
		\begin{split}
			\begin{array}{l}
				\mathbb{A}_\varepsilon ^{\left( m \right)}(v,\boldsymbol{y}) = \Bigg\{ \left( {{v^m},{{\boldsymbol{y}}^m}} \right) \in {\mathbb{A}^m}\! \times {\mathbb{B}^m}:\\
				\left| { - \displaystyle\frac{1}{m}\sum\limits_{i = 1}^m {\log P\left( {{v_{i}}} \right)}  - H\left( v \right)} \right| < \varepsilon \\
				\left| { - \displaystyle \frac{1}{m}\sum\limits_{i = 1}^m {\log p\left( {{{\boldsymbol{y}}_{i}}} \right)}  - H\left( y \right)} \right| < \varepsilon \\
				\left| { - \displaystyle \frac{1}{m}\sum\limits_{i = 1}^m {\log p\left( {{v_{i}},{{\boldsymbol{y}}_{i}}} \right)}  - H\left( {v,{\boldsymbol{y}}} \right)} \right| < \varepsilon \Bigg\} ,
			\end{array}
		\end{split}
	\end{equation}
} 
\noindent where
\begin{equation}
	p({{\boldsymbol{y}}^m},{v^m}) = \prod\limits_{i = 1}^m {p({{\boldsymbol{y}}_{i}},{v_{i}})} . 
\end{equation}

\lemma[Joint AEP] { Let the sequence $ ({v^m},{\boldsymbol{y}^m}) $ are i.i.d.  $ {\sim}  p({v^m},{\boldsymbol{y}}^m) = \prod\limits_{i = 1}^m {p({v_i},{\boldsymbol{y}}_i)}  $, for any $\varepsilon  > 0$, if $ m $  is sufficiently large, then
	
	$ \left( 1\right)  $ 
	$\Pr \left\{ {({v^m},{\boldsymbol{y}^m}) \in \mathbb{A}_\varepsilon ^{\left( m \right)}\left( {v;\boldsymbol{y}} \right)} \right\} \ge 1 - \varepsilon  $
	
	$ \left( 2\right)  $ 
	$ {2^{ - m\left[ {h(v,\boldsymbol{y}) + \varepsilon } \right]}} < p(v,{\boldsymbol{y}}) < {2^{ - m\left[ {h(v,\boldsymbol{y}) - \varepsilon } \right]}}  $
}
\noindent\textit{Proof: see ([18], pp. 195-197).}

\lemma[Conditional AEP]{   For any $\varepsilon  > 0$, if $ m $  is sufficiently large, then
	
	$ \left( 1\right)  $ 
	${2^{ - m\left( {H(v|\boldsymbol{y}) + 2\varepsilon } \right)}} < P(v|{\boldsymbol{y}}) < {2^{ - m\left( {H(v|\boldsymbol{y}) - 2\varepsilon } \right)}},  $
	
	\hspace{0.42cm} ${2^{ - m\left( {h(\boldsymbol{y}|v) + 2\varepsilon } \right)}} < p({\boldsymbol{y}}|v) < {2^{ - m\left( {h(\boldsymbol|v) - 2\varepsilon } \right)}},$ where $(v,{\boldsymbol{y}}) \in 	\mathbb{A}_\varepsilon ^{(m)}(v,\boldsymbol{y}) $.
	
	$ \left( 2\right)  $ Let $ 	\mathbb{A}_\varepsilon ^{(m)}(v|\boldsymbol{y}) = \left\{ {v:(v,{\boldsymbol{y}}) \in 	\mathbb{A}_\varepsilon ^{(m)}(v,\boldsymbol{y})} \right\} $ be a set of all $ v^{m} $ that form jointly typical sequences with $ \boldsymbol{y}^{m} $, then 
	\begin{equation}
		\left( {1\! -\! \varepsilon } \right){2^{m\left( {H(v|\boldsymbol{y}) - 2\varepsilon } \right)}} \!<\!\! \left\| {	\mathbb{A}_\varepsilon ^{(m)}(v|\boldsymbol{y})} \right\| \!\!<\! {2^{m\left( {H(v|\boldsymbol{y}) + 2\varepsilon } \right)}}.
	\end{equation}
	
	$ \left( 3\right)  $ Let $ v^{m} $ be the jointly typical sequences with $ \boldsymbol{y}^{m} $, then
	\begin{equation}
		\Pr \left( {{v^m} \in 	\mathbb{A}_\varepsilon ^{(m)}(v|\boldsymbol{y})} \right) > 1 - 2\varepsilon.
	\end{equation}
}
\noindent\textit{Proof: see([19], pp.212-214).}

The typical sets and the conditional typical set given the received sequence $ \boldsymbol{y}^{m} $ are as shown in Fig. 3. $ \left\| {{v^m}} \right\| $ and $ \left\| {{\boldsymbol{y}^m}} \right\| $ essentially denote the infinite sets contain all of the possible sequences of  $ v^{m} $  and $ \boldsymbol{y}^{m} $, respectively. Here, the two are considered finite sets. $\mathbb{A}_\varepsilon ^{(m)}\left( {v} \right) $ and $\mathbb{A}_\varepsilon ^{(m)}\left( {\boldsymbol{y}} \right) $ represent the typical sets contain all typical sequences of  $ v^{m} $  and $ \boldsymbol{y}^{m} $, respectively. The sequence  $ v^{m} $ is transmitted through the detection channel $ p\left( \boldsymbol{y}^{m}|v^{m}\right)  $ to the receiver, who then makes a decision $ {\hat{v} }^{m}$ based on the typical set detection. Specifically, the typical set detection finds  the $ {\hat{v} }^{m}$ that  forms the jointly typical sequence with $ \boldsymbol{y}^{m} $ given $ \boldsymbol{y}^{m} $. The cardinality of the typical set  $\mathbb{A}_\varepsilon ^{(m)}\left( { v|\boldsymbol{y}} \right) $ reflects the performance of target detection  since $ {\hat{v} }^{m}$ is not unique as illustrated in Fig. 3. Therefore, unlike in Shannon's coding theorem, we cannot evaluate detection results in terms of correct or not. Rather, successful and failed decisions are used as detection results, and the detection performance is measured only when the decision is successful. Below, we give the two concepts of successful and failed decisions.

\begin{figure} 
	\centering 
	\includegraphics[width=3.5in]{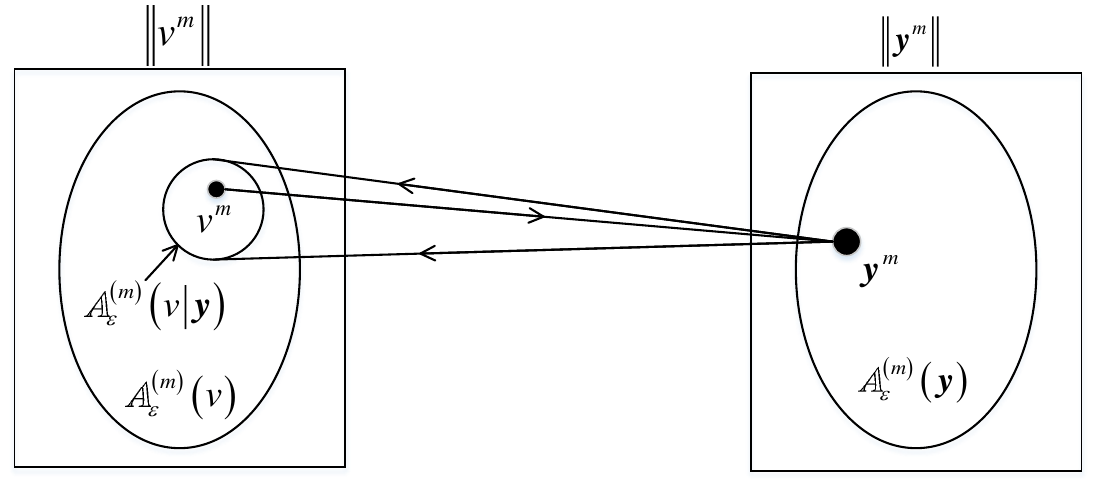} 
	\caption{The typical sets and the conditional typical set for the target detection system.} 
	\label{fig:side:b} 
\end{figure}

\definition[Detection Failure] {	If the decision sequence satisfies $ {\hat v^m} \notin	\mathbb{A}_\varepsilon ^{(m)}\left( { v|{{\boldsymbol{y}}}} \right) $,  the detection is said to be failed. The probability of detection failure is $ P_{f}^{\left( m\right) }=Pr\left\lbrace {\hat v^m} \notin	\mathbb{A}_\varepsilon ^{(m)}\left( { v|{{\boldsymbol{y}}}} \right)  \right\rbrace  $. Conversely, 	if $ {\hat v^m} \in 	\mathbb{A}_\varepsilon ^{(m)}\left( { v|{{\boldsymbol{y}}}} \right) $, the detection is said to be successful. 
} 

It is noted that uncertainty remains after a successful detection is made. Define a binary random variable $ E $, where $ E=0 $ and $ E=1 $ represent the event that successful and failed decisions are made, respectively. That is,

\begin{equation}
	E = \left\{ \begin{array}{l}
		0,{{\hat v}^m} \in  \mathbb{A}_\varepsilon ^{\left( m \right)}(v|{\boldsymbol{y}}),\\
		1,{{\hat v}^m} \notin \mathbb{A}_\varepsilon ^{\left( m \right)}(v|{\boldsymbol{y}}).
	\end{array} \right.
\end{equation} Hence, the failed probability is $P_f^{(m)} = \Pr \left( {E = 1} \right) $.

For a specific detector, the uncertainty regarding   given detection result  is evaluated by its empirical entropy $ - {\log _2}P\left( {{v_i}|{{\hat v}_i}} \right)$.
\definition [Empirical Entropy] If the detection is successful, i.e., $ E=0 $, we refer to 
	\begin{equation}
		{H^{(m)}}(v|\hat v,E = 0) = \frac{1}{m}\sum\limits_i { - {{\log }_2}P\left( {{v_i}|{{\hat v}_i}} \right)},
	\end{equation} is the empirical entropy.

In practice, it is generally not feasible to obtain the theoretical expression of the \textit{a posteriori} probability distribution $P\left( {{v_i}|{{\hat v}_i}} \right)$, making it impossible to directly compute the empirical entropy. However, the upper and lower bounds of the empirical entropy can be determined through the use of typical sets.

According to the maximal discrete entropy theorem, the empirical entropy satisfies

\begin{equation}
	{H^{(m)}}( v|\hat v, E=0) \le \frac{1}{m}\log  {\left\| {\mathbb{A}_\varepsilon ^{(m)}\left( {v|{{\boldsymbol{y}}}} \right)} \right\|} .
\end{equation}  The empirical entropy is regarded as a negative indicator of target detection, i.e., the larger empirical entropy, the less accurate performance of detection.

\definition[Empirical DI] {	 If the decision is successful, i.e., $ E=0$, the empirical DI of target detection is defined as 
	\begin{equation}
		{I^{(m)}}\left( v;\hat v| E=0 \right) = H(v) - {H^{(m)}}\left( {v|\hat v, E=0 } \right).
	\end{equation} 
}

The empirical DI is obtained from specific detector. It measures how much uncertainty about the target state has been reduced following the detection of a target by a given detector. Obviously, empirical DI is a positive evaluation indicator, i.e., the larger the empirical DI, the better the performance of the detector.

\theorem[Target Detection Theorem]{ If the DI of the target detection systems  $ \left( \mathbb{V}^{m}, \pi(v^{m}), p(y^{m}|v^{m}), \mathbb{Y}^{m} \right)  $ is  $ I(\boldsymbol{y};v) $, then all the empirical DI $ {I^{(m)}}\left( v;\hat v| E=0 \right) $ less than $ I(\boldsymbol{y};v) $ are achievable. Specifically, if $ m $ is large enough, for any  $\varepsilon  > 0$, there is a detector whose empirical DI satisfies	
	\begin{equation}
		{I^{(m)}}\left( v;\hat v| E=0 \right) >	I(v;\boldsymbol{y}) - 2\varepsilon  
	\end{equation} and the probability of failed decisions is $ P_f^{(m)} \to 0 $. Conversely, there is no detector whose  empirical DI is greater than the DI when $ P_f^{(m)} \to 0 $.	}

\noindent\textit{Proof: See Appendix B.}

The key idea of the proof of the target detection theorem is to determine in which typical set $\mathbb{A}_\varepsilon ^{(m)}\left( {{v}|\boldsymbol{y}} \right) $  the detection sequence $ \hat{v}^{m} $ is contained. For each received sequence $ \boldsymbol{y}^{m} $, the cardinality of the typical set  $\mathbb{A}_\varepsilon ^{(m)}\left( {{v}|\boldsymbol{y}} \right) $  is $ \approx {2^{mH\left( {v|\boldsymbol{y}} \right)}} $.  Then the \textit{a posteriori} entropy of every sequence is $\approx {{\log {2^{mH\left( {v|\boldsymbol{y}} \right)}}} \mathord{\left/
		{\vphantom {{\log {2^{mh\left( {v|\boldsymbol{y}} \right)}}} m}} \right.
		\kern-\nulldelimiterspace} m} = H\left( {v|\boldsymbol{y}} \right) $. The cardinality of  typical set $\mathbb{A}_\varepsilon ^{(m)}\left( {{v}} \right) $ is  $ \approx  {2^{mH\left( v \right)}} $. Then the \textit{a priori} entropy of every symbol is $\approx {{\log {2^{mH\left( {v} \right)}}} \mathord{\left/
		{\vphantom {{\log {2^{mH\left( {v} \right)}}} m}} \right.
		\kern-\nulldelimiterspace} m} = H\left( {v} \right) $.  As a result, the DI of each symbol obtained from $ \boldsymbol{y}^{m} $ is $ H\left( v \right) - H\left( {v|\boldsymbol{y}} \right) = I\left( {v;\boldsymbol{y}} \right) $. In other words, the input typical set has to be divided into subsets of number  $ {2^{m\left[ {H\left( v \right) - H\left( {v|\boldsymbol{y}} \right)} \right]}} = {2^{mI\left( {v;\boldsymbol{y}} \right)}} $ and each subset is marked with $ mI\left( {v;\boldsymbol{y}} \right) $ bit. Therefore, the DI of  $ mI\left( {v;\boldsymbol{y}} \right) $ bit is used to determine a typical set, which is converted to the DI of  $ I\left( {v;\boldsymbol{y}} \right) $ per symbol.

Even though the target detection theorem was developed and proved for a single target, it is generalizable to multiple targets. Besides, the target detection theorem is existential because we provide a rigorous mathematical proof using the central limit theorem and typical sets.

\section{Simulation Results}
In this section, numerical simulations are performed to verify the correctness of the theorems and demonstrate the effectiveness of the SAP detector. For the sake of brevity, we consider the case of a single target in the observation interval. We set the actual position of the target at $ x_{0} = 0 $.

Fig.4 shows $ P_\mathrm{D} $ and $ P_\mathrm{FA}$ versus $ \rm{SNR}$ with different \textit{a priori} probability $ \pi(1) $. It can be seen that the $ P_\mathrm{D} $ and $ P_\mathrm{FA}$ gradually become closer to the $ \pi(1) $ as the $ \rm{SNR} $ gradually approaches zero. Conversely, as the $ \rm{SNR} $ gradually increases, the $P_\mathrm{D} $ and $P_\mathrm{FA} $ converge to 1 and 0, respectively. This is attributed to the radar system becoming more efficient in distinguishing between the true target echo and the background noise as the increase in SNR. As a result, the probability of mistaking noise for a signal is reduced, while the likelihood of detecting the target is increased. The purpose of presenting Fig. 4 is to illustrate the distinction between the DI and the NP criteria. NP criterion is based on the given false alarm probability. In contrast, the DI criterion takes into account given prior probabilities and allows for variations in the false alarm probability.

\begin{figure} 
	\centering 
	\includegraphics[width=3.5in]{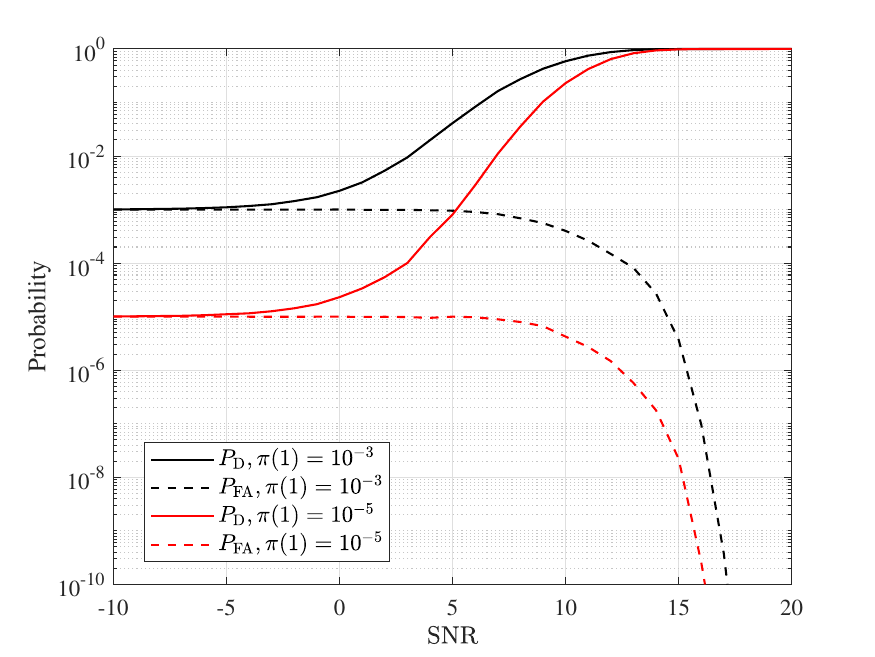} 
	\caption{$ P_\mathrm{FA} $ and $ P_\mathrm{D} $ versus $  \rm{SNR} $ for the observation interval is $ N=32 $ and different  $ \pi(1)$.} 
	\label{fig:side:b} 
\end{figure}

We verify the correctness of the false alarm theorem in Fig. 5 and Fig. 6. It is found that the larger the observation interval is, the better the curves of the false alarm probability and the curves of the \textit{a priori} probability coincide. In Fig. 5 and Fig. 6, the two curves overlap for observation intervals $ N =512 $ and $ N =2048 $, respectively. Therefore, given a sufficiently long observation interval, the false alarm theorem will be valid regardless of the $ \mathrm{SNR}$.

\begin{figure} 
	\centering 
	\includegraphics[width=3.5in]{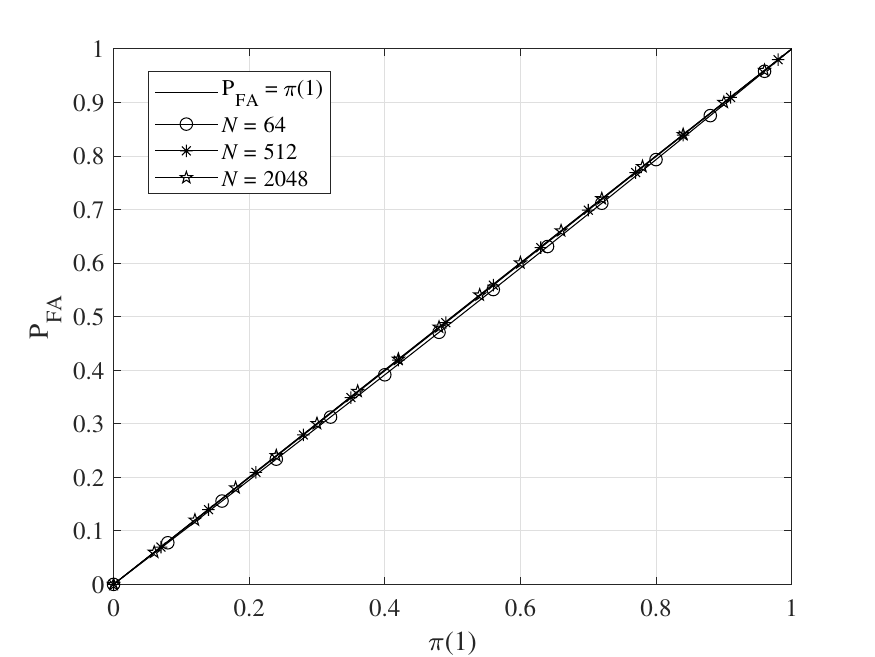} 
	\caption{$ P_\mathrm{FA} $ versus $ \pi(1)$ for the observation interval   $\left( N=64, 512, 2048 \right)$ and $ \mathrm{SNR} =0  \mathrm{dB} $.} 
	\label{fig:side:b} 
\end{figure}

\begin{figure} 
	\centering 
	\includegraphics[width=3.5in]{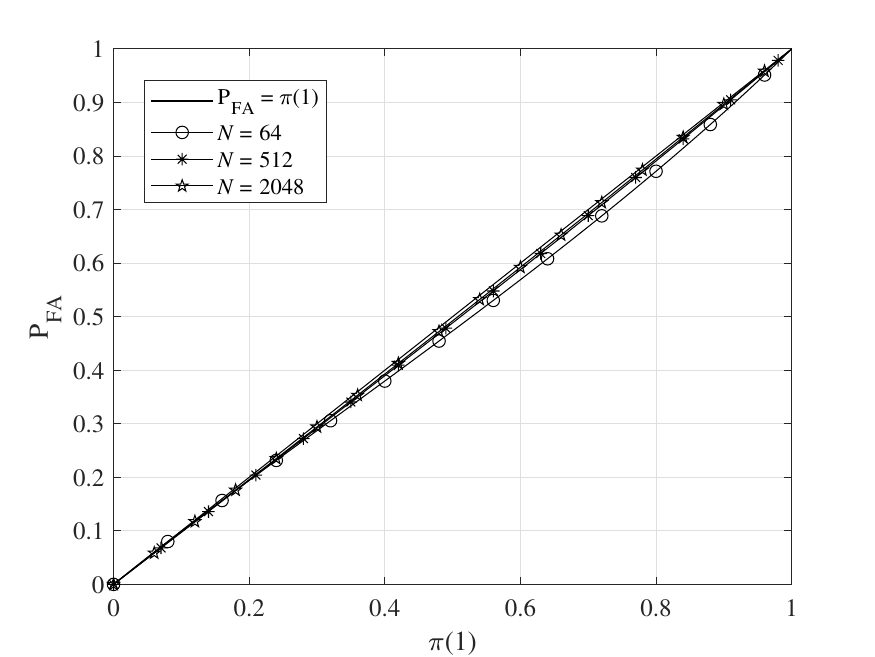} 
	\caption{$ P_\mathrm{FA} $  versus $ \pi(1)$ for the observation interval   $\left( N=64, 512, 2048 \right)$ and $ \mathrm{SNR} =5  \mathrm{dB} $.} 
	\label{fig:side:b} 
\end{figure}

Fig. 7 shows the theoretical performance of SAP detection. It can be seen that the empirical entropy of SAP gradually becomes smoother as the number of snapshots increases. When the number of snapshots is 1000, the empirical entropy of SAP is almost converged. Thus, as long as a sufficient number of snapshots are provided, SAP detection is able to achieve theoretical performance.
\begin{figure} 
	\centering 
	\includegraphics[width=3.5in]{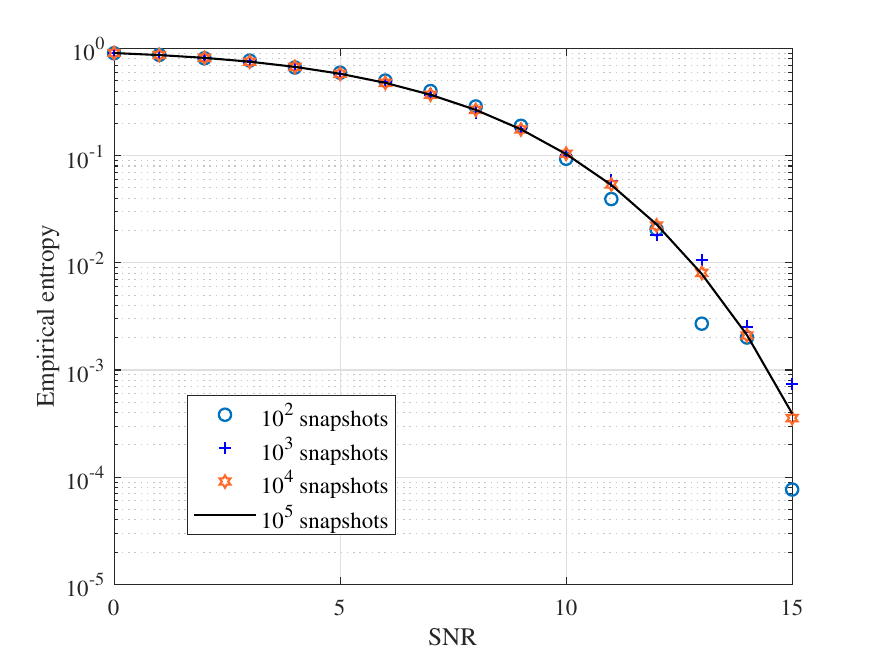} 
	\caption{The empirical entropy of SAP detection versus $ \mathrm{SNR} $ for different snapshots.} 
	\label{fig:side:b} 
\end{figure}

We verify the correctness of the target detection theorem through Fig. 8 and Fig. 9, which show the empirical performance of the MAP and NP detectors and the theoretical performance proposed in this paper. Fig. 8 shows the normalized theoretical DI and the empirical DI versus SNR for the \textit{a priori} probability $ \pi(1)=10^{-3} $. It can be seen that the empirical DI of two detectors is with the theoretical DI as the upper bound. Fig. 9 illustrates the normalized theoretical DI and the empirical DI versus $ \pi(1) $ for $ \mathrm{SNR} $ $ 5 \mathrm{dB} $. As can be seen, the DI equals $ 0 $ when $ \pi \left( 1 \right) $ is $ 0 $ or $ 1 $, meaning there is no uncertainty or additional information available about the target since the target state is known. In contrast, the DI is maximum when $ \pi \left( 1 \right) $ equals $ 0.5 $, as the target state is the most uncertain. It is noted that the curve of the NP detector is the maximum value of empirical DI, which is obtained by searching for all false alarm probabilities.

\begin{figure} 
	\centering 
	\includegraphics[width=3.5in]{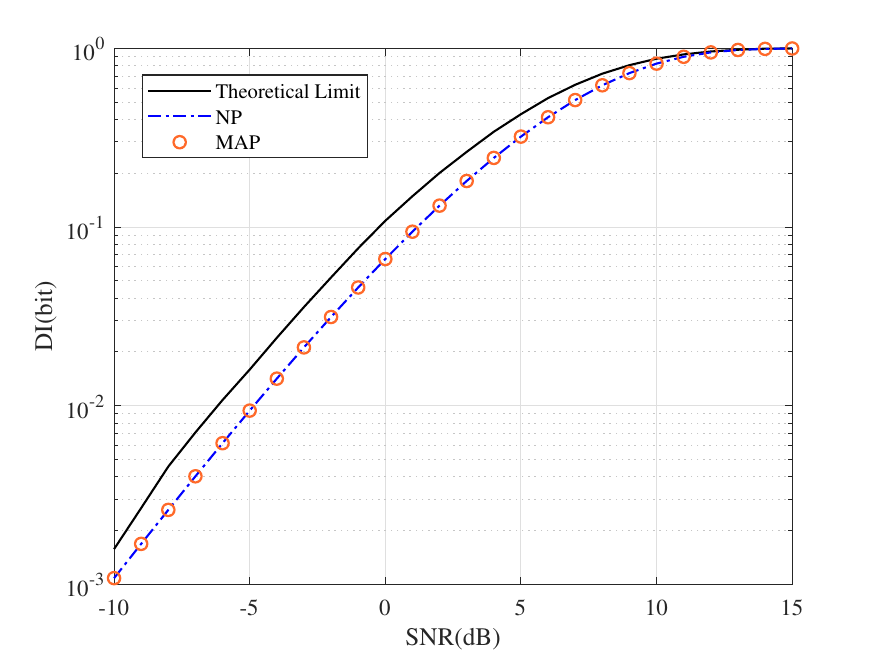} 
	\caption{The DI and the empirical DI versus $ \mathrm{SNR}  $.} 
	\label{fig:side:b} 
\end{figure}

\begin{figure} 
	\centering 
	\includegraphics[width=3.5in]{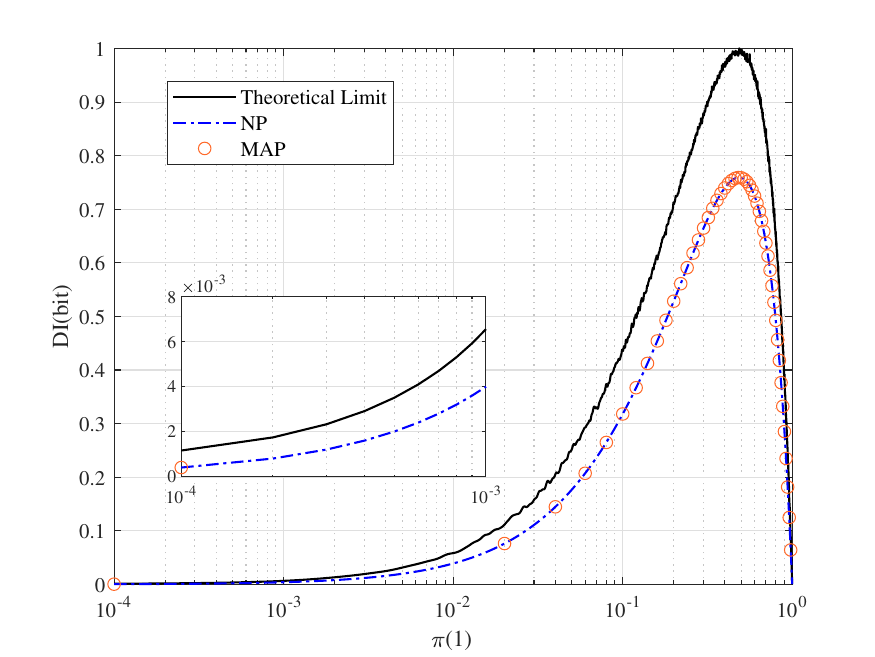} 
	\caption{The normalized theoretical DI and the normalized empirical DI versus  $ \pi \left( 1 \right) $ for $ \mathrm{SNR} = 5 \ \mathrm{dB} $.} 
	\label{fig:side:b} 
\end{figure}

\section{Conclusion}

This paper establishes an information theory framework for target detection employing Shannon's information theory, addressing the quantitative aspect of detection information and laying the information-theoretic foundation for target detection. We demonstrate that detection information is the theoretical limit of target detection, thereby solving the problem of optimal target detection and providing a theoretical basis for various detector designs. Unlike the existence proof of encoding theorems, the proof of the target detection theorem is constructive, indicating that the asymptotically optimal performance of the SAP detector is achievable. The target detection theorem is not only significant in radar signal processing but also has important applications in statistical hypothesis testing problems and mathematical principles of biological population evolution.

\section*{Acknowledgments}
This work was supported by the National Natural Science Foundation of China under Grants 62271254.

\Acknowledgements{This work was supported by National Natural Science Foundation of China (Grant Nos. 00000000 and 11111111).}

\begin{appendix}
\section{Proof of the false alarm theorem}
To ensure the generality of the false alarm theorem, we do not assume a particular scattering model here. Assuming that there is only one target in the observation interval, the received signal is $ {\boldsymbol{y}} = vs{\boldsymbol{u}}{\rm{(}}x{\rm{)}} + {\boldsymbol{w}} $. According to (7), the conditional probability density function of $ \boldsymbol{y} $ is given by 
\begin{equation}
	\begin{split}
		p\left( {{\boldsymbol{y}}\left| v \right.} \right) \!= {\left( {\frac{1}{{\pi {N_0}}}} \right)^N}\exp \left[ { - \frac{1}{{{N_0}}}\left( {{{\boldsymbol{y}}^{\rm{H}}}{\boldsymbol{y}}} \right)} \right]
		\oint \!{\pi (s)\!\exp\! \left( \!{ - \frac{{v{{\left| s \right|}^2}}}{{{N_0}}}} \right)\!\frac{1}{T\!B}\!\!\int_{ - \frac{{T\!B}}{2}}^{\frac{{T\!B}}{2}}\! \!\!{\exp \!\left\{ {\frac{2}{{{N_0}}}{\mathcal{R}}\left[\! {v{s^*}{{\boldsymbol{\psi}}^{\rm{H}}}(x){\boldsymbol{y}}}\! \right]} \right\}\!{\rm{d}}x{\rm{d}}s} }.
	\end{split}	
\end{equation} Then, based on the Bayes formula, the \textit{a posteriori} PDF is given by
	\begin{equation}
	\begin{split}
		P\left( {v\left| {\boldsymbol{y}} \right.} \right) = \frac{1}{\mathcal{Z} }\pi (v)\!
		\oint {\pi (s)\exp \left( { - \frac{{v{{\left| s \right|}^2}}}{{{N_0}}}} \right)\!\frac{1}{T\!B}\!\!\int_{ - \frac{{T\!B}}{2}}^{\frac{{T\!B}}{2}}\! \!\!{\exp \!\left\{ {\frac{2}{{{N_0}}}{\mathcal{R}}\left[\! {v{s^*}{{\boldsymbol{\psi}}^{\rm{H}}}(x){\boldsymbol{y}}}\! \right]} \right\}\!{\rm{d}}x{\rm{d}}s} }. 
	\end{split}
\end{equation}  Equation (A2) can also be expressed as

\begin{equation}
	P\left( {v\left| {\boldsymbol{y}} \right.} \right) = \frac{{\pi \left( v \right)\Upsilon \left( {v,{\boldsymbol{y}}} \right)}}{{\pi \left( 0 \right) + \pi \left( 1 \right)\Upsilon \left( {1,{\boldsymbol{y}}} \right)}},
\end{equation}
where
\begin{equation}
	\begin{aligned}
		\Upsilon \left( {v,{\boldsymbol{y}}} \right) =
		\oint \!\! {\pi (s)\! \exp\!  \left(\!  { - \frac{{v{{\left| s \right|}^2}}}{{{N_0}}}}\!  \right)\!\! \frac{1}{T\!B}\!\! \int_{ - \frac{{T\!B}}{2}}^{\frac{{T\!B}}{2}} {\!\exp \left\{ {\frac{2}{{{N_0}}}{\mathop{\mathcal{R}}\nolimits} \left[ {v{s^ * }{{\boldsymbol{\psi}}^{\rm{H}}}{\rm{(}}x{\rm{)}}{\boldsymbol{y}}} \right]} \!\! \right\}\!\! {\rm{d}}x}{\rm{d}}s} .
	\end{aligned}	
\end{equation}

When the received signal is $ {{\boldsymbol{y}}_0} $, the output of the matched filter is ${{\boldsymbol{\psi}}^{\rm{H}}}{\rm{(}}x{\rm{)}}{\boldsymbol{y} }_0= w{\rm{(}}x{\rm{)}} $. Then, we have
\begin{equation}
	P\left( {1\left| {{\boldsymbol{y}}_0}  \right.} \right) = \frac{{\pi \left( 1 \right)\Upsilon \left( {1,{{\boldsymbol{y}}_0}} \right)}}{{\pi \left( 0 \right) + \pi \left( 1 \right)\Upsilon \left( {1,{{\boldsymbol{y}}_0}} \right)}},
\end{equation} and
\begin{equation}
	\begin{aligned}
	\Upsilon \left( {1,{{\boldsymbol{y}}_0}} \right) = 		
		\oint {\pi (s)\exp \left( { - \frac{{{{\left| s \right|}^2}}}{{{N_0}}}} \right)\frac{1}{T\!B}\int_{ - \frac{{T\!B}}{2}}^{\frac{{T\!B}}{2}} {\exp \left\{ {\frac{2}{{{N_0}}}{\mathop{\mathcal{R}}\nolimits} \left[ {{s^ * }w{\rm{(}}x{\rm{)}}} \right]} \right\}{\rm{d}}x} {\rm{d}}s} .
	\end{aligned}
\end{equation}
The time average of the stationary process is equal to the ensemble average if the observation interval is sufficiently large, then,
\begin{equation}
	\begin{split}
		\Upsilon \left( {1,{{\boldsymbol{y}}_0}} \right) \!=\!
		\displaystyle\! \oint \!\!{\pi (s)\exp \!\left( { - \frac{{{{\left| s \right|}^2}}}{{{N_0}}}} \right)\!\!{{\rm{E}} _w}\!\!\left[ {\exp\! \left\{\! {\frac{2}{{{N_0}}}{\mathop{\mathcal{R}}\nolimits} \left[ {{s^ * }w{\rm{(}}x{\rm{)}}} \right]} \right\}} \right]\!\!{\rm{d}}s} .
	\end{split}
\end{equation}
Because of  $ w \left( t\right) {\sim} {\cal C}\mathcal{N}(\mu ,\sigma^2)$ , we have
\begin{equation}
	{{\rm{E}} _w}\left[ {\exp \left\{ {\frac{2}{{{N_0}}}{\mathop{\rm Re}\nolimits} \left[ {{s^ * }w{\rm{(}}x{\rm{)}}} \right]} \right\}} \right] =\displaystyle \exp \left( {\frac{{{{\left| s \right|}^2}}}{{{N_0}}}} \right).
\end{equation}
Thus
\begin{equation}
	\begin{aligned}
		\Upsilon \left( {1,{{\boldsymbol{y}}_0}} \right) &= \oint {\pi (s)\exp \left( { - \frac{{{{\left| s \right|}^2}}}{{{N_0}}}} \right)\exp \left( {\frac{{{{\left| s \right|}^2}}}{{{N_0}}}} \right){\rm{d}}s} 
		\\&= \oint {\pi (s){\rm{d}}s} 
		\\&= 1.
	\end{aligned}
\end{equation}
Substituting  (A9) into (A5) yields
\begin{equation}
	P\left( {1\left| {{\boldsymbol{y}}_0} \right.} \right) = \frac{{\pi (1)\Upsilon \left( {1,{{\boldsymbol{y}}_0}} \right)}}{{\pi (0) + \pi (1)\Upsilon \left( {1,{{\boldsymbol{y}}_0}} \right)}} = \pi (1).
\end{equation}
Based on the definition of the false alarm probability, one has
\begin{equation}
	{P_{\rm{FA}}} = {{\rm{E}} _{{\boldsymbol{y}}_0}}\left[ {P(1|{{\boldsymbol{y}}_0})} \right].
\end{equation}
According to (A10) and (A11), we have
\begin{equation}
	{P_{\rm{FA}}} =  \pi (1).
\end{equation}

\section{Proof of the target detection theorem}
Fix $\pi \left( v \right) $. Generate the extended target state sequence  $ v^{m} $ according to the distribution,
\begin{equation}
	\pi \left( {{v^m}} \right) = \prod\limits_{i = 1}^m {\pi \left( {{v_i}} \right)}. 
\end{equation} 

The  received sequence $ \boldsymbol{y}^{m} $ is generated after the extended transmitting sequence $ v^{m} $ passed the $ m $ extended channel
\begin{equation}
	p\left( {{{\boldsymbol{y}}^m}\left| {{v^m}} \right.} \right) = \prod\limits_{i = 1}^m {p\left( {{{\boldsymbol{y}}_i}\left| {{v_i}} \right.} \right)} . 
\end{equation} 

Assuming that the detection channel $ p({\boldsymbol{y}}^{m}|{v}^{m}) $ and the \textit{a priori} distribution  $\pi \left( v^{m} \right) $ are known to the receiver, the \textit{a posteriori}  probability is calculated by
\begin{equation}
	P(v|{\boldsymbol{y}}) = \frac{{\pi (v)p({\boldsymbol{y}}|v)}}{{\sum\limits_v {\pi (v)p({\boldsymbol{y}}|v)} }}. 
\end{equation} 

The typical set detection is used to make the detection sequence $ {\hat v^m} $. For the memory-less snapshot channel $\left( {{\mathbb{V}^m},p({{\boldsymbol{y}}^m}|{v^m}),{{{{\rm \mathbb{Y}}}}^m}} \right)$, the extended \textit{a posteriori} PDF of the SAP detection is $ {P_{{{ \rm SAP}}}}\left( {{{\hat v}^m}|{{\boldsymbol{y}}^m}} \right) = P\left( {{{\hat v}^m}|{{\boldsymbol{y}}^m}} \right)$. Then the joint \textit{a posteriori} PDF of the SAP detection satisfies 
\begin{equation}
	\begin{array}{c}
		\begin{aligned}
			{P_{{\rm{SAP}}}}\left( {{{\hat v}^m},{{\boldsymbol{y}}^m}} \right) &= p\left( {{{\boldsymbol{y}}^m}} \right){P_{\rm SAP}}\left( {{{\hat v}^m}|{{\boldsymbol{y}}^m}} \right)\\
			&=p\left( {{{\boldsymbol{y}}^m}} \right)p\left( {{{\hat v}^m}|{{\boldsymbol{y}}^m}} \right)\\
			&= p\left( {{{\hat v}^m},{{\boldsymbol{y}}^m}} \right).
		\end{aligned}
	\end{array} 
\end{equation}  Therefore, $ {\hat v^m} $ and $ {\boldsymbol{y}}^{m} $ are jointly typical sequences and the SAP detection is a method of typical set detection. By Lemma 5. (2), the typical set $ \mathbb{A}_\varepsilon ^m(v|\boldsymbol{y}) $ of satisfies 
\begin{equation}
	\left( {1 - \varepsilon } \right){2^{m\left( {H(v|\boldsymbol{y}) - 2\varepsilon } \right)}} < \left\| {\mathbb{A}_\varepsilon ^m( v|\boldsymbol{y})} \right\| < {2^{m\left( {H(v|\boldsymbol{y}) + 2\varepsilon } \right)}}. 
\end{equation}

If the detection is successful, i.e.,  $ E=0 $, then the empirical entropy satisfies 
\begin{equation}
	{H^{(m)}}(v|\hat v, E=0) < H(v|\boldsymbol{y}) + 2\varepsilon. 
\end{equation} Therefore, the empirical DI satisfies
\begin{equation}
	{I^{(m)}} (v;\hat v| E=0 ) > I(v;\boldsymbol{y}) - 2\varepsilon. 
\end{equation} The achievability of the DI is proved.

There are two events that cause the detection failure for the typical set detection. The first is that $ v^{m} $ and $ \boldsymbol{y}^{m} $ do not form jointly typical sequences, denoted by $ {\bar A_T} $. The second is that $ \hat{v}^{m} $ and $ \boldsymbol{y}^{m} $  do not form jointly typical sequences, denoted by $ {\bar A_R} $. Then, the probability of failed  is 
\begin{equation}
	\begin{array}{c} 
		\begin{aligned}
			P_f^{(m)} &= \Pr \left( {{{\bar A}_T} \cup {{\bar A}_R}} \right)\\
			&\le \Pr \left( {{{\bar A}_T}} \right) + \Pr \left( {{{\bar A}_R}} \right).
		\end{aligned}
		
	\end{array}
\end{equation} According to the lemma 5.(3), we have
\begin{equation}
	P_f^{(m)} \le 2\varepsilon, 
\end{equation} and $P_f^{(m)} $ converges to zero as $ m $ increases.

Below, we prove the converse to the target detection theorem. For this, we will extend Fano's inequality to the target detection. Let us firstly introduce some definitions and a lemma. 

Then we focus on the conditional entropy $ H\left( {{v^m},E|{{{\boldsymbol{y}}}^m}} \right) $. According to the chain rule for entropy, we have
\begin{equation}
	H\left( {{v^m},E|{{\boldsymbol{y}}}^m} \right) = H\left( {E|{{{\boldsymbol{y}}}^m}} \right) + H\left( {{v^m}|{{{\boldsymbol{y}}}^m},E} \right).
\end{equation} It is obvious that $ H\left( {E|{{{\boldsymbol{y}}}^m}} \right) < 1 $. The remaining term $ H\left( {{v^m}|{{{\boldsymbol{y}}}^m},E} \right)$ can be expressed as 
\begin{equation}
	\begin{aligned}
		H\left( {{v^m}|{{{\boldsymbol{y}}}^m},E} \right) &= \left( {1 - P_f^{(m)}} \right)H\left( {{v^m}|{{{\boldsymbol{y}}}^m},E = 0} \right) \\
		&+ P_f^{(m)}H\left( {{v^m}|{{{\boldsymbol{y}}}^m},E = 1} \right),
	\end{aligned}
\end{equation} where $ H\left( {{v^m}|{{{\boldsymbol{y}}}^m},E = 0} \right)  $ denotes the uncertainty when the detection is successful.

According to the property of typical sets, we have
\begin{equation}
	\begin{array}{c}
		\begin{aligned}
			H\left( {{v^m}|{{{\boldsymbol{y}}}^m},E = 0} \right) 
			&\mathop  \le \limits^{\left( a \right)} \log \left\| {\mathbb{A}_\varepsilon ^{(m)}\left( {v|{{\boldsymbol{y}}}} \right)} \right\|\\
			&\mathop  \le \limits^{\left( b \right)} \log {2^{m\left[ {H\left( {v|{{\boldsymbol{y}}}} \right) + 2\varepsilon } \right]}}\\
			&= m\left[ {H\left( {v|{{\boldsymbol{y}}}} \right) + 2\varepsilon } \right],
		\end{aligned}
	\end{array}
\end{equation} where $ \left( a \right) $ is gained by the maximum discrete entropy theorem and $ \left( b \right) $ is resorted to lemma 5.(2). Similarly,
\begin{equation}
	\begin{array}{c}
		\begin{aligned}
			H\left( {{v^m}|{{{\boldsymbol{y}}}^m},E = 1} \right)
			&\le \log \left( {\left\| {\mathbb{A}_\varepsilon ^m(v)} \right\| - \left\| {\mathbb{A}_\varepsilon ^m(v|{{\boldsymbol{y}}})} \right\|} \right)\\
			&\le \log \left( {{2^{m\left[ {H\left( v \right) + \varepsilon } \right]}} - {2^{m\left[ {H\left( {v|{{\boldsymbol{y}}}} \right) - 2\varepsilon } \right]}}} \right)\\
			&\le \log {2^{m\left[ {H\left( v \right) + \varepsilon } \right]}}\\
			&= m\left[ {H\left( v \right) + \varepsilon } \right].
		\end{aligned}
	\end{array}
\end{equation} Therefore, we have the following lemma.

\begin{lemma}\label{Lemma 7: Generalization of Fano's inequality}
	$\textbf{[Extended Fano's Inequality]}$
	\begin{equation}
		\begin{aligned}
			H\left( {{v^m},E|{{{\boldsymbol{y}}}^m}} \right) \!\le  \!1  \!+ \! \left( {1 - P_f^{(m)}} \right)H\left( {{v^m}|{{{\boldsymbol{y}}}^m},E = 0} \right) 
			+ P_f^{(m)}m\left[ {H\left( v \right) + \varepsilon } \right].
		\end{aligned}
	\end{equation}
\end{lemma}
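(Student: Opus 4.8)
The plan is to derive the inequality by combining the entropy chain rule with a one‑bit estimate for the binary decision variable and a maximum‑entropy count over the typical set, assembling the three ingredients that the analysis preceding the lemma has set up. First I would apply the chain rule for conditional entropy to split the left‑hand side as $H(V^m, E|Y^m) = H(E|Y^m) + H(V^m|Y^m, E)$. Since $E$ is, by Definition 6, an indicator random variable taking only the two values $0$ (successful decision) and $1$ (failed decision), its conditional entropy is at most one binary digit, so $H(E|Y^m) \le 1$; this supplies the leading constant in the claimed bound.

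Next I would expand the second term by conditioning on the value of $E$, writing
\[
H(V^m|Y^m, E) = \left(1-P_f^{(m)}\right)H(V^m|Y^m, E=0) + P_f^{(m)}\,H(V^m|Y^m, E=1),
\]
using $\Pr(E=1)=P_f^{(m)}$. The successful‑decision term $H(V^m|Y^m, E=0)$ is carried through unmodified, since it is precisely the quantity the lemma retains and which later feeds the empirical entropy of Definition 7. It then remains to bound the failed‑decision term $H(V^m|Y^m, E=1)$. Here I would invoke the maximum‑entropy principle: on the failure event the true sequence $V^m$ is (with high probability) confined to the marginal typical set $\mathbb{A}_\varepsilon^{(m)}(V)$ yet excluded from the conditional typical set $\mathbb{A}_\varepsilon^{(m)}(V|Y)$, so it ranges over at most $\|\mathbb{A}_\varepsilon^{(m)}(V)\| - \|\mathbb{A}_\varepsilon^{(m)}(V|Y)\|$ sequences. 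Because the uniform distribution maximizes entropy over a finite support, dropping the subtracted positive term and applying the cardinality estimate $\|\mathbb{A}_\varepsilon^{(m)}(V)\| < 2^{m[H(V)+\varepsilon]}$ from Lemma 3.(3) yields $H(V^m|Y^m, E=1) \le m[H(V)+\varepsilon]$. Substituting the three bounds back into the chain‑rule decomposition produces exactly the stated inequality.

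The step I expect to be the main obstacle is the justification of the failed‑decision bound, namely that conditioning on $E=1$ legitimately restricts $V^m$ to the set difference $\mathbb{A}_\varepsilon^{(m)}(V)\setminus\mathbb{A}_\varepsilon^{(m)}(V|Y)$. Since $E$ is defined through the decision $\hat V^m$ rather than directly through the true state $V^m$, I would need to argue carefully that, for $m$ large, $V^m$ itself remains typical by the AEP (Lemma 2) while the conditionally typical structure that supports a successful decision is broken, so that the crude max‑entropy count over the difference set applies; this is the Fano‑type bookkeeping that must be handled with the typical‑set properties rather than a simple alphabet‑size argument. The remaining manipulations—the chain rule, the one‑bit bound on $H(E|Y^m)$, and the convex combination over the two values of $E$—are routine and introduce no real difficulty.
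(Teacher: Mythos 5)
Your proposal follows the paper's proof essentially step for step: the chain rule $H(V^m,E|Y^m)=H(E|Y^m)+H(V^m|Y^m,E)$, the one-bit bound $H(E|Y^m)\le 1$, the convex split over $E=0$ and $E=1$, and the bound $H(V^m|Y^m,E=1)\le \log\left( \left\| \mathbb{A}_\varepsilon^{(m)}(V)\right\| - \left\| \mathbb{A}_\varepsilon^{(m)}(V|Y)\right\| \right) \le m\left[ H(V)+\varepsilon \right]$ via the set-difference count and the typical-set cardinality. The ``main obstacle'' you flag (justifying the restriction of $V^m$ to $\mathbb{A}_\varepsilon^{(m)}(V)\setminus\mathbb{A}_\varepsilon^{(m)}(V|Y)$ on the failure event) is in fact asserted without further argument in the paper as well, so your treatment is, if anything, more candid about that step.
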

\begin{proof}
	According to (B19) and (B11), we have
	\begin{equation}
		\begin{array}{c}
			\begin{aligned}
				H\left( {{v^m},E|{{{\boldsymbol{y}}}^m}} \right)& = H\left( {E|{{{\boldsymbol{y}}}^m}} \right) + H\left( {{v^m}|E,{{{\boldsymbol{y}}}^m}} \right)\\
				&\le 1 +H\left( {{v^m}|E,{{{\boldsymbol{y}}}^m}} \right)\\
				&= 1 \! + \! \left( { \!1  \!- \!P_f^{(m)}} \! \right) \!H \!\left( {{v^m}|{{{\boldsymbol{y}}}^m} \!,E \! = \! 0} \right) + P_f^{(m)}H\left( {{v^m}|{{{\boldsymbol{y}}}^m},E = 1} \right).
			\end{aligned}
		\end{array}
	\end{equation}
	According to (B12) and (B13), (B15) is rewritten as
	\begin{equation}
		\begin{aligned}
			H\left( {{v^m},E|{{{\boldsymbol{y}}}^m}} \right) \!\le  \!1 \! + \! \left( {1 - P_f^{(m)}} \right)H\left( {{v^m}|{{{\boldsymbol{y}}}^m},E = 0} \right) + P_f^{(m)}m\left[ {H\left( v \right) + \varepsilon } \right].
		\end{aligned}
	\end{equation}
\end{proof}	

Next, the proof of the converse to the target detection theorem	is provided. According to the properties of entropy and mutual information, we have
\begin{equation}
	\begin{array}{c}
		\begin{aligned}
			H\left( {{v^m}} \right) &= H\left( {{v^m}|{{{\boldsymbol{y}}}^m}} \right) + I\left( {{v^m};{{{\boldsymbol{y}}}^m}} \right)\\
			&\le H\left( {{v^m},E|{{{\boldsymbol{y}}}^m}} \right) + I\left( {{v^m};{{{\boldsymbol{y}}}^m}} \right),
		\end{aligned}
	\end{array}
\end{equation} where $ H\left( {{v^m}} \right) = mH\left( v \right) $. By the property of the extended channel, we have
\begin{equation}
	I\left( {{v^m};{{{\boldsymbol{y}}}^m}} \right) \le mI\left( {v;{{\boldsymbol{y}}}} \right).
\end{equation} In light of lemma 6, we have
\begin{equation}
	\begin{aligned}
		mH\left( v \right) \le 1 + \left( {1 - P_f^{(m)}} \right)H\left( {{v^m}|{{{\boldsymbol{y}}}^m},E = 0} \right) 
		+ P_f^{(m)}m\left[ {H\left( v\right) + \varepsilon } \right] + mI\left( {v;{{\boldsymbol{y}}}} \right).
	\end{aligned}
\end{equation} Handling (B19) yields 
	\begin{equation}
	\begin{array}{c}
		\begin{aligned}
			H\left( V \right) - {H^{\left( m \right)}}\left( {V|{\hat V},E=0} \right)& \le \frac{1}{m} - P_f^{(m)}H\left( {{V^m}|{{{Y}}^m},E = 0} \right) + P_f^{(m)}\left[ {H\left( V \right) + \varepsilon } \right] + I\left( {V;{{Y}}} \right)\\
			&\le \frac{1}{m} - P_f^{(m)}\left[ {H\left( {V|{{Y}}} \right) + 2\varepsilon } \right] + P_f^{(m)}\left[ {H\left( V\right) + \varepsilon } \right] + I\left( {V;{{Y}}} \right)\\
			&= \frac{1}{m} + P_f^{(m)}I\left( {V;{{Y}}} \right) - P_f^{(m)}\varepsilon  + I\left( {V;{{Y}}} \right).
		\end{aligned}
	\end{array}
\end{equation} If $ m \to \infty ,P_f^{(m)} \to 0 $, (B20) is expressed as
\begin{equation}
	{I^{(m)}}(v;{\hat v}|E=0 ) < I(v;{{\boldsymbol{y}}}).
\end{equation} 
\end{appendix}

\end{document}